\let\@authorsaddresses\@empty
\title{Almost-Sure Termination by Guarded Refinement}
\author{Simon Oddershede Gregersen}
\affiliation{
  \institution{New York University}
  \country{USA}
}
\email{s.gregersen@nyu.edu}
\author{Alejandro Aguirre}
\affiliation{
  \institution{Aarhus University}
  \country{Denmark}
}
\email{alejandro@cs.au.dk}
\author{Philipp~G. Haselwarter}
\affiliation{
  \institution{Aarhus University}
  \country{Denmark}
}
\email{pgh@cs.au.dk}
\author{Joseph Tassarotti}
\affiliation{
  \institution{New York University}
  \country{USA}
}
\email{jt4767@nyu.edu}
\author{Lars Birkedal}
\affiliation{
  \institution{Aarhus University}
  \country{Denmark}
}
\email{birkedal@cs.au.dk}
\keywords{almost-sure termination, probabilistic coupling, Markov chains}
\begin{document}

\begin{abstract}
  Almost-sure termination is an important correctness property for probabilistic programs, and a number of program logics have been developed for establishing it.
  However, these logics have mostly been developed for first-order programs written in languages with specific syntactic patterns for looping.
  In this paper, we consider almost-sure termination for higher-order probabilistic programs with general references.
  This combination of features allows for recursion and looping to be encoded through a variety of patterns.
  Therefore, rather than developing proof rules for reasoning about particular recursion patterns, we instead propose an approach based on proving \emph{refinement} between a higher-order program and a simpler probabilistic model, in such a way that the refinement preserves termination behavior.
  By proving a refinement, almost-sure termination behavior of the program can then be established by analyzing the simpler model.

  We present this approach in the form of \thereflog{}, a higher-order separation logic for proving termination-preserving refinements.
  \thereflog{} uses probabilistic couplings to carry out relational reasoning between a program and a model.
  To handle the range of recursion patterns found in higher-order programs, \thereflog{} uses guarded recursion, in particular the principle of L\"{o}b induction.
  A technical novelty is that \thereflog{} does not require the use of transfinite step indexing or other technical restrictions found in prior work on guarded recursion for termination-preservation refinement.
  We demonstrate the flexibility of this approach by proving almost-sure termination of several examples, including first-order loop constructs, a random list generator, treaps, and a sampler for Galton-Watson trees that uses higher-order store.
  All the results have been mechanized in the Coq proof assistant.
\end{abstract}

\maketitle

\section{Introduction}\label{sec:introduction}

Probabilistic programs are programs that draw samples from probability distributions.
They have many applications but also complex and unintuitive behaviors.
Therefore, there has been long-standing interest in formal techniques for reasoning about them.

In particular, termination of probabilistic programs is an important property for the correctness of various sampling and consensus algorithms.
This work considers the problem of \emph{almost-sure termination} (AST), that is, whether a probabilistic program terminates with probability 1.
The problem of showing AST is known to be computationally harder than termination of deterministic programs~\cite{kaminski_hardness_2019} so there is clearly no hope of completeness.
On the other hand, the problem is decidable for large classes of probabilistic processes~\cite{brazdil_analyzing_2013}, and indeed such decision procedures have been implemented in probabilistic model checkers such as PRISM~\cite{kwiatkowska_prism_2011}.
Moreover, there is a long and rich history in the mathematical literature that has studied the termination behavior of multiple families of stochastic processes~\citep{spitzer2013principles,athreya2012branching}. 

Multiple works have developed program logics to reason about termination or expected running time of probabilistic programs.
For example, the work on weakest pre-expectation calculi by~\citet{morgan_pgcl} can, in particular, be used to prove AST, and the expected runtime transformer by~\citet{kaminski_weakest_2016} provides a compositional way to reason about the running time of probabilistic programs.
Ranking Supermartingales \cite{chakarov_probabilistic_2013,fu_termination_2019} are probabilistic analogues of ranking functions that can be used to prove termination of probabilistic programs by adapting results from martingale theory to program verification.
Others \cite{McIverMKK18} develop rules to prove termination of particularly complex iteration schemes.

The works listed above have all increased the reach of termination analysis of probabilistic programs, but they are mostly limited to first-order, imperative languages and their techniques are adapted to this particular setting.
Some works \cite{KobayashiLG19, DBLP:journals/pacmpl/AvanziniBL21} have considered stateless higher-order programs and higher-order recursion schemes, but it is unclear to what extent these approaches scale to richer languages.
For instance, in a setting that includes higher-order functions and higher-order references, recursion and thus divergence can arise in multiple ways, \eg{}, by recursion through the store, so rules that have been tailored to syntactic while loops or particular recursion schemes will be difficult to generalize and apply.

As an example, consider the randomized function ${\sf walk}$ shown below, which uses a fixed-point combinator {\sf fix} defined using Landin's knot~\cite{DBLP:journals/cj/Landin64}:
\begin{align*}
  {\sf fix} &\triangleq \Lam f . \Let r = \Alloc~(\Lam x. x) in r \gets (\Lam x. f~(\deref{r})~x) ;~\deref{r} \\
  {\sf F} &\triangleq
            \Lam f . \Lam n . \hspace{-0.5em}
            \begin{array}[t]{l}
              \If {n == 0} then {()}\\
              \Else{\If{\Flip}then{f~(n-1)}\Else{f~(n+1)}}
            \end{array} \\
  {\sf walk} &\triangleq {\sf fix}~{\sf F}
\end{align*}
The $\Flip$ expression in {\sf F} reduces uniformly at random to either $\True$ or $\False$.

The execution of ${\sf walk}~n$ depends on a complex interaction between probabilistic choice and higher-order store.
Nevertheless, ${\sf walk}~n$ almost-surely terminates for any starting value~$n$.
The reason is that the value of the argument on each successive call to $F$ follows a symmetric random walk on the natural numbers which terminates upon reaching $0$.
This is a well-known stochastic process that almost-surely terminates.
The random walk can be represented by the probabilistic transition system depicted below
\begin{center}
  \tikzset{every loop/.style={min distance=5mm,in=180,out=140,looseness=2}}
  \begin{tikzpicture}[node distance=1.5cm,on grid, auto]
    \node[state, accepting] (s0) {$0$};
    \node[state] (s1) [right=of s0] {$1$};
    \node[state] (s2) [right=of s1] {$2$};
    \node[state] (s3) [right=of s2] {$3$};
    \node (dots) [right=of s3] {$\cdots$};
    \path[->] (s1) edge [bend left] node {$\frac{1}{2}$} (s2);
    \path[->] (s2) edge [bend left] node {$\frac{1}{2}$} (s3);
    \path[->] (s3) edge [bend left] node[xshift=-7pt] {$\frac{1}{2}$} (dots);
    \path[->] (dots) edge [bend left] node[xshift=-7pt] {$\frac{1}{2}$} (s3);
    \path[->] (s3) edge [bend left] node {$\frac{1}{2}$} (s2);
    \path[->] (s2) edge [bend left] node {$\frac{1}{2}$} (s1);
    \path[->] (s1) edge [bend left] node {$\frac{1}{2}$} (s0);
  \end{tikzpicture}
\end{center}
where, in this analogy, the labels on the states represent the current value of the argument $n$, and $0$ represents a terminal state, corresponding to the fact that recursion stops when $n$ is $0$.
Since execution of ${\sf walk}~n$ terminates whenever the transition system terminates, and the transition system almost-surely terminates, ${\sf walk}~n$ must almost-surely terminate.
  
In this work, we explore a novel approach to proving AST that allows us to capture this kind of argument in a precise and formal way.
We develop \thereflog, a \emph{higher-order guarded separation logic} that allows one to prove termination of examples like ${\sf walk}~n$.
Instead of reasoning about the termination probability directly, \thereflog{} establishes a termination-preserving \emph{refinement} between a user-chosen probabilistic model (like the random walk above) and a probabilistic program (such as ${\sf walk}~n$).
Termination preservation here implies that the probability of termination of the program is at least as high as the probability of termination of the model.
Thus, if the model almost-surely terminates, so does the program.
This allows us to transfer the problem of proving AST of a probabilistic program to that of proving AST of a model, and in turn it allows us to make use of a wide set of tools for showing termination of the model.
The benefit of this approach is that it makes it possible to apply the rich theory and extensive prior work that has been developed for AST of first-order programs, without the need to adapt that work to the setting of a higher-order language.
In particular, for the motivating example at hand, we have used \thereflog{} to show that ${\sf walk}~n$ refines the symmetric random walk model and, as a consequence, that it almost-surely terminates.

To support reasoning about the different forms of recursion present in higher-order languages with higher-order store, \thereflog{} uses \emph{guarded recursion} based on step indexing \cite{nakano-later,appel-later,DBLP:journals/corr/abs-1208-3596}.
Several recent studies have used guarded recursion for termination-preserving refinement for \emph{non}-probabilistic programs \cite{DBLP:conf/esop/TassarottiJ017, transfinite-iris, trillium}, but it is not \emph{a priori} clear that their formulation of refinement can be adapted to the probabilistic setting.
In fact, to preserve termination, that prior work has had to impose various restrictions on non-determinism of programs, or to replace standard step indexing with \emph{transfinite} step indexing, and so one might expect that probabilistic termination preservation would require similar technical changes.
Surprisingly, as we show in \cref{sec:soundness}, it turns out that these workarounds are not needed in the probabilistic setting.

One limitation of the refinement approach is that it requires coming up with suitable models of programs.
If the model is very close to the original program, the refinement may be easy to show, yet analyzing the termination of the model is then no simpler.
On the other hand, if the model is considerably simpler than the program, one might worry that the intended refinement is difficult to prove.
To demonstrate empirically that \thereflog{} is effective, we have verified a range of examples, several of which demonstrate intricate use of higher-order functions and higher-order state, putting them beyond the scope of previous techniques.

\paragraph{Contributions.}
In summary,
\begin{enumerate}
\item We develop a compositional, higher-order guarded separation logic for showing termination-preserving refinement of higher-order probabilistic programs, which we use as a technique for showing almost-sure termination.
\item We identify two new and orthogonal uses cases for \emph{asynchronous coupling} \cite{clutch} as a mechanism for (a) coupling \emph{one} model step to \emph{multiple} program samplings, and (b) managing guarded recursion, that is, to eliminate later modalities \emph{now}, which could otherwise only have been eliminated \emph{in the future} if ordinary couplings had been used.
\item We demonstrate our approach on a rich set of examples, showcasing the potential both on the ``classic'' first-order examples but also on more involved implementations which make use of local and dynamically-allocated higher-order state.
  Our examples also demonstrate how our approach allows for concise, composable, and higher-order specifications that resemble specifications in non-probabilistic separation logics. 
\item All of the results presented in this paper are mechanized \cite{caliper_artifact} in the Coq proof assistant \cite{coq}, including the semantics, the logic, the mathematical analysis results, and the case studies, with the help of the Coquelicot library for real analysis \cite{coquelicot} and the Iris separation logic framework \cite{irisjournal}.
\end{enumerate}


\section{Background and Preliminaries}\label{sec:preliminaries}
In this section, we recall some basic definitions from probability theory and we define what it means to execute a probabilistic program. 
Although the \thereflog{} approach is language generic, in this paper we fix a probabilistic ML-like language, whose semantics we describe here.
Finally, we define a notion of probabilistic coupling that will be central to the soundness of our approach.
  
\subsection{Probabilistic Semantics}

To account for non-terminating behavior, we make use of (discrete) probability \emph{sub}-distributions.
\begin{definition}
  A \emph{sub-distribution} over a countable set $A$ is a function $\mu : A \to [0,1]$ such that $\sum_{a \in A} \mu(a) \leq 1$.
  We write $\Distr{A}$ for the set of all sub-distributions over $A$.
\end{definition}

\begin{definition}
  The \emph{support} of $\mu \in \Distr{A}$ is the set of elements $\supp(\mu) \eqdef{} \set{ a \in A \mid \mu(a) > 0 }$.
\end{definition}

\begin{lemma}
  Let $\mu \in \Distr{A}$, $a \in A$, and $f : A \to \Distr{B}$.
  Then
  \begin{enumerate}
  \item $\mbind(f,\mu)(b) \eqdef{} \sum_{a \in A} \mu(a) \cdot f(a)(b)$
  \item $\mret(a)(a') \eqdef{}
    \begin{cases}
      1 & \text{if } a = a' \\
      0 & \text{otherwise}
    \end{cases}$
  \end{enumerate}
  gives monadic structure to $\DDistr$.
  We write $\mu \mbindi f$ for $\mbind(f, \mu)$.
\end{lemma}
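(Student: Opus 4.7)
The plan is to check (i) that $\mret(a)$ and $\mbind(f,\mu)$ really do lie in the asserted sub-distribution spaces, and (ii) the three monad laws: left identity $\mret(a) \mbindi f = f(a)$, right identity $\mu \mbindi \mret = \mu$, and associativity $(\mu \mbindi f) \mbindi g = \mu \mbindi (\lambda a.\, f(a) \mbindi g)$. All of these reduce to pointwise equalities of non-negative countable sums.

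First I would verify well-definedness. That $\mret(a) \in \Distr{A}$ is immediate, since $\sum_{a' \in A} \mret(a)(a') = 1$. For $\mbind(f,\mu)$ I would compute
\[
\sum_{b \in B} \mbind(f,\mu)(b) = \sum_{b \in B} \sum_{a \in A} \mu(a) \cdot f(a)(b) = \sum_{a \in A} \mu(a) \sum_{b \in B} f(a)(b) \leq \sum_{a \in A} \mu(a) \leq 1,
\]
where the exchange of sums is justified by Tonelli's theorem (all summands are non-negative and $A, B$ are countable), the inner sum is bounded by $1$ because $f(a) \in \Distr{B}$, and the outer sum is bounded by $1$ because $\mu \in \Distr{A}$.

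Next I would check the three monad laws pointwise. For left identity, for any $b$, the sum $\sum_{a' \in A} \mret(a)(a') \cdot f(a')(b)$ has its only non-zero term at $a' = a$, giving $f(a)(b)$. For right identity, symmetrically, $\sum_{a' \in A} \mu(a') \cdot \mret(a')(a) = \mu(a)$. Associativity again amounts to swapping the order of a double sum: both sides evaluated at any point $c$ expand to the iterated sum $\sum_a \sum_b \mu(a) \cdot f(a)(b) \cdot g(b)(c)$, in one order on the left and the other on the right, and Tonelli's theorem identifies them.

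The only technical point of concern is manipulating possibly infinite sums, but since $A$ and $B$ are countable and all summands are non-negative, Tonelli's theorem applies unconditionally and no delicate convergence argument is needed. The remaining work is routine unfolding of definitions; in a mechanization the main effort would be invoking the appropriate lemmas about non-negative double series from a library such as Coquelicot.
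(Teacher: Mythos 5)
Your proof is correct and is exactly the standard argument: well-definedness of $\mret$ and $\mbind$ followed by the three monad laws, all reducing to rearrangements of non-negative countable sums justified by Tonelli. The paper states this lemma without proof as a routine fact (discharged in the Coq mechanization via Coquelicot, as you anticipate), so there is nothing to contrast; your write-up fills in precisely the expected details.
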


We obtain a small-step operational semantics for both programs and models by considering them as (discrete-time) \emph{Markov chains}.

\begin{definition} \label{def:markov-chain}
  A (sub)-\emph{Markov chain} over a countable set $\Markov$ is a function $\stepdistr \colon \Markov \to \Distr{\Markov}$.
\end{definition}
Given a Markov chain over $\Markov$ and a decidable predicate $\tofinal : \Markov \to \mProp$ such that if $\tofinal(\mstate)$ then $\stepdistr(\mstate)(\mstate') = 0$ for all $\mstate'$, we define what it means for a Markov chain to evaluate to a final state.
First, we define a stratified execution distribution $\exec_{n}\colon \Markov \to \Distr{\Markov}$ by induction on $n$:
\begin{align*}
  \exec_{n}(\mstate) \eqdef{}
  \begin{cases}
    \mathbf{0} & \text{if}~ \neg \tofinal(\mstate)~\text{and}~n = 0 \\
    \mret(\mstate) & \text{if}~\tofinal(\mstate) \\
    \stepdistr(\mstate) \mbindi \exec_{(n - 1)} & \text{otherwise}
  \end{cases}
\end{align*}
where $\mathbf{0}$ denotes the everywhere-zero sub-distribution.
Observe that the value $\exec_{n}(\mstate)(\mstate')$ denotes the probability of stepping from a state $\mstate$ to a final state $\mstate'$ in at most $n$ steps.
The probability that an execution starting from a state $\mstate$ reaches a final state $\mstate'$ is the limit of its stratified approximations, which exists by monotonicity and boundedness:
\begin{align*}
  \exec(\mstate)(\mstate') \eqdef{} \lim_{n \to \infty} \exec_{n}(\mstate)(\mstate').
\end{align*}
The probability that an execution from state $\mstate$ terminates is thus $\execTerm(\mstate) \eqdef{} \sum_{\mstate' \in \Markov} \exec(\mstate)(\mstate')$.
By the monotone convergence theorem we get the lemma below, which intuitively says that it suffices to consider all finite approximations to bound the termination probability.
\begin{lemma}\label{lem:exec-mct}
  If $\sum_{\mstate' \in \Markov} \exec_{n}(\mstate)(\mstate') \leq r$ for all $n$ then $\execTerm(\mstate) \leq r$.
\end{lemma}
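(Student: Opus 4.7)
The plan is to show the claim by interchanging the limit and the countable sum using the monotone convergence theorem, as hinted by the paragraph preceding the lemma. Concretely, since $\execTerm(\mstate) = \sum_{\mstate' \in \Markov} \lim_{n \to \infty} \exec_{n}(\mstate)(\mstate')$, it suffices to prove the pointwise monotonicity $\exec_{n}(\mstate)(\mstate') \leq \exec_{n+1}(\mstate)(\mstate')$ in $n$, and then pull the limit outside the sum.

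First I would establish monotonicity by induction on $n$. For the base case $n = 0$, if $\tofinal(\mstate)$ then $\exec_{0}(\mstate) = \exec_{1}(\mstate) = \mret(\mstate)$, and otherwise $\exec_{0}(\mstate) = \mathbf{0}$, which is pointwise below any sub-distribution, including $\exec_{1}(\mstate)$. For the inductive step, the only non-trivial case is when $\neg \tofinal(\mstate)$, in which case $\exec_{n+1}(\mstate) = \stepdistr(\mstate) \mbindi \exec_{n}$ and $\exec_{n+2}(\mstate) = \stepdistr(\mstate) \mbindi \exec_{n+1}$; the induction hypothesis combined with monotonicity of $\mbindi$ in its second argument (each coefficient $\stepdistr(\mstate)(\mstate'')$ is non-negative) gives the result.

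Next I would invoke the monotone convergence theorem for non-negative series (equivalently, Tonelli applied to counting measure) to obtain
\begin{align*}
  \execTerm(\mstate)
  \;=\; \sum_{\mstate' \in \Markov} \lim_{n \to \infty} \exec_{n}(\mstate)(\mstate')
  \;=\; \lim_{n \to \infty} \sum_{\mstate' \in \Markov} \exec_{n}(\mstate)(\mstate').
\end{align*}
By the hypothesis, every term of the outer sequence is bounded by $r$, so its limit is bounded by $r$, yielding $\execTerm(\mstate) \leq r$ as required.

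I do not expect any real obstacle here: the entire argument reduces to standard facts about countable sums of non-negative reals. The only mildly delicate point is the monotonicity step, but it follows cleanly from the three-case definition of $\exec_{n}$ once one observes that $\mbindi$ preserves pointwise inequalities. In a Coq mechanization this would translate to a direct induction plus an invocation of a library lemma about monotone convergence of series of non-negative reals (available, e.g., via Coquelicot as cited in the paper).
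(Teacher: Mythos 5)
Your proof is correct and matches the paper's approach: the paper justifies this lemma solely by appeal to the monotone convergence theorem, and your argument fills in exactly the expected details (pointwise monotonicity of $\exec_{n}$ in $n$ by induction, then interchange of the limit and the countable sum). The monotonicity you establish is the same fact the paper already relies on when asserting that the limit defining $\exec$ exists ``by monotonicity and boundedness,'' so there is nothing to add.
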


\paragraph{\thelang.}
The syntax of $\thelang{}$, the programming language we consider throughout this paper, is defined by the grammar below.
\begin{align*}
  \val, \valB \in \Val \bnfdef{}
  & z \in \integer \ALT
    b \in \bool \ALT
  \TT \ALT
  \loc \in \Loc \ALT
    \Rec \lvarF \lvar = \expr \ALT
  (\val,\valB) \ALT
  \Inl \val  \ALT
    \Inr \val \ALT
  \\
  \expr \in \Expr \bnfdef{}&
  \val \ALT
  \lvar \ALT
    \expr_1 + \expr_2 \ALT
    \expr_1 - \expr_2 \ALT
    \ldots \ALT
  \expr_1~\expr_2 \ALT
  \If \expr then \expr_1 \Else \expr_2 \ALT
  \Fst \expr \ALT
  \Snd \expr \ALT \\
  & \Match \expr with \Inl \val~ => \expr_1 | \Inr \valB => \expr_2 end \ALT
  \Alloc \expr \ALT
  \deref \expr \ALT
  \expr_1 \gets \expr_2 \ALT
  \Rand \expr
  \\
  \lctx \in \Ectx \bnfdef{} &
 -
 \ALT \expr\,\lctx
 \ALT \lctx\,\val
 \ALT \Alloc\lctx
 \ALT \deref \lctx
 \ALT \expr \gets \lctx
 \ALT \lctx \gets \val
 \ALT \Rand\lctx
\ALT \ldots
\\
  \state \in \State \eqdef{}& \Loc \fpfn \Val
  \\
  \cfg \in \Conf \eqdef{}& \Expr \times \State
\end{align*}
The term language is mostly standard:
$\Alloc \expr$ allocates a new reference, $\deref \expr$ dereferences the location $\expr$ evaluates to, and $\expr_{1} \gets \expr_{2}$ assigns the result of evaluating $\expr_{2}$ to the location that $\expr_{1}$ evaluates to.
We introduce syntactic sugar for lambda abstractions $\Lam \var . \expr$ defined as $\Rec {\_} \var = \expr$,
let-bindings $\Let \var = \expr_{1} in \expr_{2}$ defined as $(\Lam \var . \expr_{2})~\expr_{1}$, and sequencing $\expr_{1} ; \expr_{2}$ defined as $\Let \_ = \expr_{1} in \expr_{2}$.

The language has a call-by-value Markov chain semantics $\stepdistr : \Conf \to \Distr{\Conf}$ defined using evaluation contexts $\lctx \in \Ectx$.
We set $\tofinal(\expr, \sigma) \eqdef{} \left(\expr \in \Val\right)$.
The semantics is mostly standard: all the non-probabilistic constructs reduce as usual with weight $1$, \eg{}, $\stepdistr(\If \True then \expr_1 \Else \expr_2, \sigma) = \mret(\expr_{1}, \sigma)$ and $\Rand \tapebound$ reduces uniformly at random, \ie{},
\begin{align*}
  & \stepdistr(\Rand \tapebound, \sigma)(n, \sigma) =
    \begin{cases}
      \frac{1}{\tapebound + 1} & \text{for } n \in \{ 0, 1, \ldots, N \} \\
      0 & \text{otherwise}.
    \end{cases}
\end{align*}
We recover the Boolean operation $\Flip$ by defining $\Flip \eqdef{} (\Rand 1 == 1)$.

\subsection{Probabilistic Couplings}\label{sec:couplings}

\emph{Probabilistic coupling} \cite{thorisson/2000, lindvall_lectures_2002, Villani2008OptimalTO} is a mathematical technique for reasoning about pairs of probabilistic processes.
Informally, couplings relate the outputs of two processes by specifying how corresponding sampling statements should be correlated.
This correlation is described by constructing a particular joint distribution over pairs of samples from the two processes.
Traditional definitions of couplings implicitly require that the masses of the two distributions being related are the same.
Instead, we make use of an \emph{asymmetric} notion of couplings, which allows the left-hand side distribution to have less mass than the right-hand side distribution.
\begin{definition}[Left-partial coupling \cite{clutch}]\label{def:coupling}
  Let $\mu_{1} \in \Distr{A}$ and $\mu_{2} \in \Distr{B}$. 
  A sub-distribution $\mu \in \Distr{A \times B}$ is a \emph{left-partial coupling} of $\mu_{1}$ and $\mu_{2}$ if
  \begin{enumerate}
  \item $\All a . \sum_{b \in B} \mu(a, b) = \mu_{1} (a)$
  \item $\All b . \sum_{a \in A} \mu(a, b) \leq \mu_{2} (b)$
  \end{enumerate}
  We write $\refcoupl{\mu_1}{\mu_2}$ if there exists a left-partial coupling of $\mu_{1}$ and $\mu_{2}$.
  Given a relation $R \subseteq A \times B$ we say $\mu$ is a left-partial $R$-coupling if furthermore $\supp(\mu) \subseteq R$.
  We write $\refRcoupl{\mu_1}{\mu_2}{R}$ if there exists a left-partial $R$-coupling of $\mu_{1}$ and $\mu_{2}$.
\end{definition}
Notice how $\refRcoupl{\mathbf{0}}{\mu}{R}$ holds trivially for any $\mu \in \Distr{B}$ and $R \subseteq A \times B$ by picking $\mathbf{0} \in \Distr{A \times B}$ as the witness.
This would not be the case for the traditional symmetric definition of coupling.

Once a coupling has been established, we can often use it to extract a concrete relation between the two probability distributions.
\emph{E.g.}, for $(=)$-couplings, we can conclude point-wise inequality.
\begin{lemma}
  If $\refRcoupl{\mu_{1}}{\mu_{2}}{(=)}$ then $\mu_{1}(a) \leq \mu_{2}(a)$ for all $a$.
\end{lemma}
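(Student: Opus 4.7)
The plan is to unfold the definition of a left-partial $(=)$-coupling and exploit the support condition to collapse the two marginal sums onto the diagonal. Let $\mu \in \Distr{A \times A}$ be a witness, so that $\sum_{b} \mu(a,b) = \mu_1(a)$, $\sum_{a'} \mu(a',b) \leq \mu_2(b)$, and $\supp(\mu) \subseteq \{(x,y) \mid x = y\}$.

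First I would observe that the support condition forces $\mu(a,b) = 0$ whenever $a \neq b$: indeed, if $(a,b) \notin \supp(\mu)$ then $\mu(a,b) = 0$ by definition of support, and otherwise $(a,b) \in \supp(\mu)$ forces $a = b$. Consequently, for any fixed $a$, the sum $\sum_{b \in A} \mu(a,b)$ has at most one nonzero term, namely $\mu(a,a)$, so $\sum_{b} \mu(a,b) = \mu(a,a)$. Symmetrically, $\sum_{a'} \mu(a',a) = \mu(a,a)$.

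Combining these with the two marginal conditions gives $\mu_1(a) = \mu(a,a)$ from (1), and $\mu(a,a) \leq \mu_2(a)$ from (2), which chained together yield the desired $\mu_1(a) \leq \mu_2(a)$.

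There is essentially no obstacle here; the only subtlety is the manipulation of the two infinite sums, but each is really a single-term sum once the support condition is applied, so no convergence argument is needed beyond the nonnegativity of $\mu$. The argument is purely a diagram chase through the three clauses of \cref{def:coupling} specialised to the diagonal relation.
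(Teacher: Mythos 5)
Your proof is correct and is exactly the standard argument one would expect here: the paper states this lemma without an explicit proof (it is discharged in the Coq mechanization), and your reasoning---using $\supp(\mu) \subseteq (=)$ to collapse both marginal sums to the single diagonal term $\mu(a,a)$, then chaining $\mu_1(a) = \mu(a,a) \leq \mu_2(a)$ via the two marginal clauses of the definition---is the intended one. No gaps; the observation that each sum has at most one nonzero term by nonnegativity of $\mu$ handles the only point where care is needed.
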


Moreover---most important for our purposes---from \emph{any} left-partial coupling of $\mu_{1}$ and $\mu_{2}$, we can conclude that the mass of $\mu_{1}$ bounds the mass of $\mu_{2}$ from below.
\begin{lemma}\label{lem:coupling-mass}
  If $\refcoupl{\mu_{1}}{\mu_{2}}$ then $\sum_{a \in A} \mu_{1}(a) \leq \sum_{b \in B} \mu_{2}(b)$.
\end{lemma}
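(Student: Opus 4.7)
The plan is a direct chain of (in)equalities obtained by unfolding the definition of left-partial coupling and interchanging the two summations over the joint distribution.

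First, I would extract from the hypothesis $\refcoupl{\mu_1}{\mu_2}$ a witness sub-distribution $\mu \in \Distr{A \times B}$ satisfying the two marginal conditions of \cref{def:coupling}. By the first marginal condition, $\mu_1(a) = \sum_{b \in B} \mu(a,b)$ for every $a \in A$, so summing over $a$ gives
\[
  \sum_{a \in A} \mu_1(a) \;=\; \sum_{a \in A} \sum_{b \in B} \mu(a, b).
\]
Next I would swap the order of summation to obtain $\sum_{b \in B} \sum_{a \in A} \mu(a,b)$, and then apply the second marginal condition pointwise in $b$, which bounds the inner sum by $\mu_2(b)$. Monotonicity of the outer sum then yields $\sum_{b \in B} \mu_2(b)$, completing the chain.

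The only non-bookkeeping step is justifying the exchange of the two sums. Since $\mu$ takes values in $[0,1]$ and the double sum is bounded by $1$ (as $\mu$ is a sub-distribution on $A \times B$), this is Tonelli's theorem for the counting measure on the countable set $A \times B$, which specializes to the standard fact that non-negative double series can be reindexed freely. If one prefers a fully elementary argument, one can instead establish the inequality on all finite partial sums $\sum_{a \in A_0} \mu_1(a)$ for finite $A_0 \subseteq A$ --- where the interchange is literally a finite rearrangement --- and then pass to the supremum over finite $A_0$, which recovers $\sum_{a \in A} \mu_1(a)$ by definition. I expect this to be the only delicate point; everything else is a one-line application of the coupling conditions. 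In the mechanization this likely corresponds to an existing Coquelicot lemma about series of non-negative reals.
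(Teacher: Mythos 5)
Your proof is correct and is the standard argument this lemma relies on: the paper itself states the lemma without a written proof (deferring to the Coq mechanization), and your chain --- unfold the witness coupling, sum the first marginal condition over $a$, interchange the two non-negative countable sums, and bound the inner sum by the second marginal condition --- is exactly the intended route. You also correctly identify the only delicate step, the interchange of summation, and both of your justifications (Tonelli for counting measure, or finite partial sums followed by a supremum) are adequate.
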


As part of the proof of our soundness theorem (\cref{thm:soundness}), we show that the refinement logic constructs a left-partial coupling of a partial execution of the model and the full execution of the program.
Using \cref{lem:exec-mct} and \cref{lem:coupling-mass} we may then conclude that the termination probability of the program is bounded below by the termination probability of the model.
  

\section{A Probabilistic Termination-Preserving Refinement Logic}\label{sec:rellog}
\newcommand{\walkex}{{\sf walk}}
\thereflog{} is a \emph{probabilistic relational separation logic}.
In this section, we give a high-level overview of \thereflog{}'s rules and walk through simple example uses.

\thereflog{} uses a \emph{refinement} weakest precondition, written $\rwpre{\expr}{\pred}$, where $\expr$ is the program we want to prove a refinement about and $\pred$ is a postcondition.
As in much prior work on refinement reasoning in separation logic, \eg, \citet{caresl,iris-logrel-journal,clutch}, the model that we want to relate to $\expr$ is tracked as a \emph{ghost state} assertion of the form $\spec(\mstate)$, which asserts that the model is currently in state $\mstate$.
We use Markov chains to represent model systems.

To establish a termination-preserving refinement between a Markov chain model starting in a state $\mstate$ and a program $\expr$, we prove an entailment of the form $\spec(\mstate) \vdash \rwpre{\expr}{\pred}$, for an arbitrary postcondition $\pred$.
The following soundness theorem then implies a lower bound between termination of the model and the program:
\begin{theorem}[Soundness]\label{thm:soundness}
  Let $\mstate$ be a state of a Markov chain.
  If \[\spec(\mstate) \proves \rwpre{\expr}{\pred}\] then $\execTerm(\mstate) \leq \execTerm(\expr, \sigma)$ for all program states $\sigma$.
\end{theorem}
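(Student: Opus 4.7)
The plan is to reduce the theorem to two ingredients already laid out in \cref{sec:preliminaries}: the monotone-convergence characterization of termination in \cref{lem:exec-mct}, and the mass comparison principle for left-partial couplings in \cref{lem:coupling-mass}. Concretely, I would prove a stronger intermediate statement: for every $n \in \nat$,
\[
\spec(\mstate) \proves \rwpre{\expr}{\pred} \;\;\Longrightarrow\;\; \refcoupl{\exec_{n}(\mstate)}{\exec(\expr,\sigma)}.
\]
Once this holds, \cref{lem:coupling-mass} gives $\sum_{\mstate'} \exec_{n}(\mstate)(\mstate') \leq \sum_{(\valB,\sigma')} \exec(\expr,\sigma)(\valB,\sigma') = \execTerm(\expr,\sigma)$ uniformly in $n$, and \cref{lem:exec-mct} (with $r = \execTerm(\expr,\sigma)$) lifts this to $\execTerm(\mstate) \leq \execTerm(\expr,\sigma)$, which is the conclusion.

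The core work is the coupling construction, which I would carry out by induction on $n$. For the base case, if $\neg \tofinal(\mstate)$ then $\exec_{0}(\mstate) = \mathbf{0}$ and the zero sub-distribution on $\Markov \times (\Val \times \State)$ is a left-partial coupling witness trivially (as observed right after \cref{def:coupling}); if $\tofinal(\mstate)$ then $\exec_{0}(\mstate) = \mret(\mstate)$ has mass $1$ and we must build the coupling from knowing that $\rwpre{\expr}{\pred}$ together with $\spec(\mstate)$ forces the program to itself execute to a value with full mass. For the inductive step, I would unfold the semantic interpretation of $\rwpre{\expr}{\pred}$ in the step-indexed model: at each program reduction, the refinement weakest precondition supplies a joint sub-distribution over one program step and zero or more model steps such that each continuation again satisfies a $\rwpre$ with updated spec ownership. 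These per-step local couplings are glued together using the monadic structure of $\DDistr{}$ (the $\mbindi$ and $\mret$ from \cref{sec:preliminaries}) so that composing $n$ model transitions on the left against the full program execution on the right yields a global left-partial coupling of $\exec_{n}(\mstate)$ and $\exec(\expr,\sigma)$. The asymmetry built into \cref{def:coupling}, and exploited by asynchronous couplings, is what lets the construction absorb program steps that are not matched by any model step without destroying the witness.

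The principal obstacle will be reconciling the step-indexed semantics of $\rwpre$ with the limit-of-finite-prefixes semantics of $\exec$. Because $\rwpre$ is defined by guarded recursion, its one-step unfolding only gives information ``later,'' and in prior non-probabilistic work this is exactly the mismatch that forced the use of transfinite step indexing or extra non-determinism restrictions. The technical crux here is to show that in the probabilistic setting one finite stratification on the model side (the index $n$) can always be matched against the full program execution using only ordinary step indexing: each elimination of a later modality is paid for by a concrete program step, while asynchronous couplings allow us to advance the model either in lockstep, ahead of, or behind the program. This flexibility is what makes the induction on $n$ close without needing ordinals beyond $\omega$, and it is the step I expect to require the most care when formalizing.
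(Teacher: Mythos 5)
Your top-level skeleton matches the paper's exactly: both reduce the theorem to showing $\refcoupl{\exec_{n}(\mstate)}{\exec(\expr,\sigma)}$ for every $n$ (the paper's \cref{cor:rwp-refcoupl}) and then conclude with \cref{lem:coupling-mass} and \cref{lem:exec-mct}. The gaps are in how you propose to build that coupling. First, your primary induction on $n$ cannot close on its own: between two model transitions the program may take an unbounded number of steps (the constructors \ruleref{ref-prog} and \ruleref{cpl-prog} recurse without decreasing $n$ and without a $\later$), and your own base case with $\tofinal(\mstate)$ already needs the full strength of the claim that such program-only segments are well-founded, so that $\exec(\expr,\sigma)$ has mass $1$ whenever the model is final. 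The paper gets both facts from a single source: $\rwpre{\expr}{\pred}$ and the coupling precondition are \emph{least} fixed points, and \cref{lem:refines-soundness-laterN} is proved by induction on that fixed-point structure (after first erasing separation-logic resources down to a plain refinement relation, \cref{lem:rwp-refines}), with the case split on $n$ happening inside each constructor and the stepwise couplings glued by \cref{lem:coupling-comp}. Your induction on $n$ needs to be replaced by, or nested inside, this structural induction.

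Second, your accounting of the later modality is backwards: in \thereflog{} a $\later$ is eliminated only when the \emph{model} takes a step, not the program. This is not a slip of terminology but exactly the invariant that lets the proof close with ordinary step indexing: $n$ model steps cost at most $n$ laters, so the internal statement is $\later^{n}\,\refcoupl{\exec_{n}(\mstate)}{\exec(\cfg)}$, which the base-logic soundness theorem (\cref{thm:si-soundness}) then exports to the meta-logic for each $n$ separately. If laters were instead paid for by program steps, the stratification index on the model side would not line up with the step index, and the resulting refinement would point in the opposite direction, as in the partial-correctness-style logics discussed in the paper's related work. Relatedly, asynchronous couplings are not what absorbs program steps unmatched by model steps---that role is played by the no-later program-step constructor together with the asymmetry of left-partial couplings; presampling tapes are an orthogonal extension handled separately in the soundness proof via \cref{lem:erasure}.
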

For the motivating example discussed in \cref{sec:introduction}, if we label the states of the random walk Markov chain by numbers $n$, then showing $\spec(n) \proves \rwpre{{\sf walk}~n}{\TRUE}$ will establish that ${\sf walk}~n$ is a termination-preserving refinement of the model from starting state $n$. Because the model almost-surely terminates for all $n$, we thus get that ${\sf walk}~n$ almost-surely terminates.

\subsection{Separation Logic Connectives and Basic Unary Rules}
\thereflog{} is developed on top of the Iris framework~\cite{irisjournal} and inherits Iris's basic separation logic connectives. We write $\prop \sep \propB$ for \emph{separating conjunction}, $\prop \wand \propB$ for its adjoint \emph{separating implication} (magic wand), and $\progheap{\loc}{\val}$ for the separation logic resource that denotes ownership of the location $\loc$ containing the value $\val$. Throughout the paper we omit connectives that are used to manipulate Iris-style ghost resources and invariants, \eg{}, the \emph{fancy update modality} \cite{irisjournal} of Iris, since these ideas are orthogonal to the core challenge of refinement reasoning and our use is entirely standard.

For proving refinement weakest preconditions, the logic has typical separation logic rules for reasoning about the basic commands of the language.
A selection of these rules is shown in \cref{fig:program-logic}.
The relation $\expr_{1} \purestep \expr_{2}$ says that $\expr_{1}$ can take a \emph{pure} step, \ie{}, a non-stateful and non-probabilistic reduction step, to the expression $\expr_{2}$.

\begin{figure*}
  \centering
  \begin{align*}
    \expr_{1} \purestep \expr_{2} \ast \rwpre{\expr_{2}}{\pred} &\proves \rwpre{\expr_{1}}{\pred} \ruletag{rwp-pure}\\
    \All \loc . \progheap{\loc}{\val} \wand \pred(\loc) &\proves \rwpre{\Alloc \val}{\pred} \ruletag{rwp-alloc} \\
    (\progheap{\loc}{\val} \wand \pred(\val))\sep {\progheap{\loc}{\val}} &\proves \rwpre{\deref\loc}{\pred} \ruletag{rwp-load} \\
    (\progheap{\loc}{\valB} \wand \pred\TT)\sep {\progheap{\loc}{\val}} &\proves \rwpre{\loc \gets \valB}{\pred} \ruletag{rwp-store} \\
    \All n \leq \tapebound . \pred(n) &\proves \rwpre{\Rand \tapebound}{\pred} \ruletag{rwp-rand} \\
    \pred(\val) &\proves \rwpre{\val}{\pred} \ruletag{rwp-val} \\
    \rwpre{\expr}{\val .\, \rwpre{\fillctx\lctx[\val]}{\pred}} &\proves \rwpre{\fillctx\lctx[\expr]}{\pred} \ruletag{rwp-bind} \\
    (\All \val . \Psi(\val) \wand \pred(\val)) \sep \rwpre{\expr}{\Psi} &\proves \rwpre{\expr}{\pred} \ruletag{rwp-mono} \\
    \prop \sep \rwpre{\expr}{\pred} &\proves \rwpre{\expr}{v . \, \prop \sep \pred(v)} \ruletag{rwp-frame}
  \end{align*}
  \caption{Program logic rules governing the $\rwpre{\expr}{\pred}$ connective.}
  \label{fig:program-logic}
\end{figure*}

\subsection{Guarded Recursion and Relational Rules}
Notably, there is no rule in \cref{fig:program-logic} for reasoning about recursion or loops.
As alluded to in \cref{sec:introduction}, there are multiple ways to encode a form of recursion in a language like \thelang, so rules based on specific syntactic patterns cannot cover the full range of such mechanisms.
Instead, \thereflog{} makes use of \emph{guarded} recursion.

To explain and motivate the need for guarded recursion, let us return to the random walk example from \cref{sec:introduction} and see the issues that arise in trying to prove $\spec(n) \vdash \rwpre{\walkex\, n}{\TRUE}$.

One might first try to prove this by induction on $n$.
However, this attempt would fail in the inductive case, since when the $\Flip$ in $\walkex\ n$ resolves to $\False$, the code effectively makes a recursive call in which the argument is \emph{incremented} to $n+1$.
Thus, in that branch, after stepping through the definition of $\fix$ and $F$, we will eventually find ourselves having to prove a goal of the form $\spec(n+1) \vdash \rwpre{\walkex\, (n+1)}{\TRUE}$, which does not match the induction hypothesis.

Instead, the solution in \thereflog{} is to make use of guarded recursion, in particular the \ruleref{l\"{o}b}
induction rule~\cite{nakano-later}:
\begin{mathpar}
  \inferH{l\"{o}b}
  {\later \prop \proves \prop}
  {\TRUE \proves \prop}
\end{mathpar}
which says that to prove $\prop$, it suffices to prove $\prop$ under the induction hypothesis $\later \prop$, where $\later$ is the so-called ``later'' modality~\cite{nakano-later, appel-later, DBLP:journals/corr/abs-1208-3596}.
Selected other rules for this modality are shown in \cref{fig:later-modality}.
For our example, taking $P$ in the \ruleref{l\"{o}b} rule to be $(\All n. \spec(n) \wand \rwpre{\walkex\, n}{\TRUE})$, will mean that this induction hypothesis allows us to assume the desired refinement holds for \emph{all} $n$, albeit guarded by the later modality.
Because the hypothesis applies for all $n$, we will not run into the obstacle we had with induction on $n$ when the branch in $\Flip$ resolved to $\False$.

\begin{figure}
\begin{mathpar}
  \inferH{later-intro}
  { \prop \proves \propB }
  { \prop \proves \later \propB }
  \and
  \inferH{later-mono}
  { \prop \proves \propB }
  { \later \prop \proves \later \propB }
  \and
  \inferHB{later-and}
  { \prop \proves \later (\propB \land \propC) }
  { \prop \proves \later \propB \land \later \propC }
  \and
  \inferH{later-sep}
  { \prop \proves \later (\propB \ast \propC) }
  { \prop \proves \later \propB \ast \later \propC }
\end{mathpar}
\caption{Selected rules for the $\later$ modality.}
\label{fig:later-modality}
\end{figure}

But how do we eliminate the $\later$ modality guarding the induction hypothesis?
In most guarded program logics, the later modality can be eliminated whenever we take a ``step'' of the program being verified.
However, following \citet{transfinite-iris}, in \thereflog{} the later modality may only be eliminated when the \emph{model} program performs a transition (more precisely, when the model transition system has a transition from the current state of the model to another state).
Since \thereflog{} has no other built-in rule for reasoning about loops or recursion, this ensures that each time the program does some kind of looping or recursion using the \ruleref{l\"{o}b} rule, at least one step will have been performed by the model.
Intuitively, this means that the program can only have a non-terminating execution if the model can take infinitely many transitions.

The simplest later elimination rule applies when the model can make a deterministic transition:
\begin{mathpar}
  \inferH{rwp-spec-det}
  {
    \stepdistr(\mstate_1)(\mstate_2) = 1 \\
    \spec(\mstate_2) \ast \prop \vdash \rwpre{\expr}{\pred}
  }
  {\spec(\mstate_1) \ast \later \prop \vdash \rwpre{\expr}{\pred}}
\end{mathpar}
If the model is currently in a state $\mstate_{1}$, as witnessed by ownership of the resource $\spec(\mstate_{1})$, and the model can deterministically make a step to $\mstate_{2}$, then we may progress the model, stripping a later modality from the assumption $\prop$.

\subsection{Coupling Rules}
Of course, \ruleref{rwp-spec-det} is not sufficient for the example at hand, in which the model only has randomized transitions.
To address this, \thereflog{} also satisfies the following \emph{coupling} rule, similar to that of pRHL~\cite{barthe_formal_2008}, in which the possible transitions of the model and program must be ``matched up'' in a way that preserves probabilities:
\begin{mathpar}
  \inferH{rwp-coupl-rand}
  { \refRcoupl{\stepdistr(\mstate_1)}{\unif(\tapebound)}{R} \\
  \vdash \All (\mstate_2, n) \in R . (\spec(\mstate_2) \ast P) \wand \rwpre{n}{\pred}
  }
  {\spec(\mstate_1) \ast \later P \vdash \rwpre{\Rand \tapebound}{ \pred }}
\end{mathpar}
where $\unif(\tapebound)$ denotes the uniform distribution on the set $\{ 0, 1, \ldots N\}$.
When executing a $\Rand \tapebound$ command, if the model is currently in the state $\mstate_{1}$, the rule says that if we can show a probabilistic coupling $\refRcoupl{\stepdistr(\mstate_1)}{\unif(\tapebound)}{R}$ of the two steps, then we may continue reasoning \emph{as if} the program and the model progressed to states in the support of the coupling.
Furthermore, the $\later$ guarding the assumption $P$ is removed, reflecting that the model has made a transition.

As an example, a special case of this rule for $\Flip$ is the following:
\begin{mathpar}
  \inferrule{
    m_f \neq m_t \\\\
    \stepdistr(m)(m_f) = 1/2 \\
    P \ast \spec(m_f) \vdash \rwpre{K[\False]}{\pred} \\\\
    \stepdistr(m)(m_t) = 1/2 \\
    P \ast \spec(m_t) \vdash \rwpre{K[\True]}{\pred}
  }
  {\later P \ast \spec(m) \vdash \rwpre{K[\Flip]}{\pred}}
\end{mathpar}
With this rule, we have to reason about two cases for how the $\Flip$ command resolves, and for each case, we pick a state to transition the model to, subject to the requirement that the model transitions happen with the same probability 1/2 as the transitions that $\Flip$ makes.

In the $\walkex$ example, when the model is in state $m = n$, we take $m_f = n+1$ and $m_t = n-1$.
Thus, for the case where the $\Flip$ resolves to $\False$ and then makes the recursive call of $\walkex\ (n+1)$, we will have $\spec(n+1)$.
Therefore, we may make use of the induction hypothesis from L\"{o}b induction, which will have had the $\later$ modality removed through the application of the rule for $\Flip$.
The case for $\True$ is similar.

\subsection{Example: Repeated Coin Flips}

We now consider another example of reasoning about loops in \thereflog{}. One of the simplest almost-surely (but not always) terminating stochastic processes is the one that repeatedly tosses a fair coin until it gets tails.
There exists a non-terminating run, \ie{}, the run where one always gets heads, but it happens only with probability zero.
We can describe the process using the following diagram where $\True$ is used for heads and $\False$ for tails.
\begin{center}
  \tikzset{every loop/.style={min distance=5mm,in=180,out=140,looseness=2}}
  \begin{tikzpicture}[node distance=1.5cm,on grid, auto]
    \node[state] (true) {$\True$};
    \node[state, accepting] [right=of true] (false) {$\False$};
    \path[->] (true) edge [loop above] node {$\frac{1}{2}$} (true);
    \path[->] (true) edge [bend right] node[below] {$\frac{1}{2}$} (false);
  \end{tikzpicture}
\end{center}
It is straightforward to show that $\exec_{n}(\True) = 1 - \frac{1}{2}^{n}$ and thus $\execTerm(\True) = 1$.

Using the notation $\WhileS \expr_{1} do \expr_{2} end \eqdef{} (\Rec f () = \If \expr_{1} then \expr_{2}; f~() \Else ())~()$ we implement this process as a loop that repeatedly flips a coin until it gets $\False$.
\newcommand{\flips}{\mathsf{flips}}
\begin{align*}
  \flips \eqdef{} &\WhileS \Flip do () end
\end{align*}
Using \thereflog{}, we show that $\flips$ formally refines the model by showing the specification
\[
  \vdash \spec(\True) \wand \rwpre{\flips}{\_ .~\spec(\False)}
\]
and thus $\flips$ almost-surely terminates.

To prove the entailment we apply \ruleref{l\"{o}b} and are left with the proof obligation
\[
  \later\left(\spec(\True) \wand \rwpre{\flips}{\_ .~\spec(\False)} \right) \proves \spec(\True) \wand \rwpre{\flips}{\_ .~\spec(\False)}.
\]
That is, we have assumed our initial goal but under a later modality.
After introducing the specification resource, we symbolically step the program forward using \ruleref{rwp-pure}.
We now apply \ruleref{rwp-coupl-rand} using an equality coupling of $\unif(1)$ which allows us to strip the later modality from the induction hypothesis and continue reasoning \emph{as if} the coin in the program and the model have the same outcome $b$.
\begin{align*}
  &\spec(b) \sep \left(\spec(\True) \wand \rwpre{\flips}{\_ .~\spec(\False)} \right) \proves \\
  &\rwpre{\If b then ();~\flips \Else ()}{ \_ .~\spec(\False)}.
\end{align*}
We continue by case distinction on $b$: if $b$ is $\False$ we are immediately done and if $b$ is $\True$ we apply \ruleref{rwp-pure} followed by the induction hypothesis which finishes the proof.

While the repeated coin flip is a simple example that mainly serves to illustrate a minimal proof in \thereflog{}, the same pattern and recipe applies to many other first-order probabilistic looping constructs as well.
For example, the program
\begin{align*}
  &\Let r = \Alloc n in \\[-3pt]
  & \While \deref r \neq 0 do
    {
    \If \Flip then r \gets \deref r - 1 \Else r \gets \deref r + 1
    }
    end {}
\end{align*}
can be shown to refine the symmetric random walk from \cref{sec:introduction} using exactly this pattern and the refinement proof looks much like for the repeated coin flip.

\subsection{Summary}
As we have seen, at a high level, the rules of \thereflog{} combine three key ingredients:
\begin{enumerate}
\item \emph{Higher-order separation logic}, as embodied in the Iris framework~\cite{irisjournal}, which provides powerful tools for reasoning about modular use of state in higher-order programs.

\item \emph{Guarded recursion}, formulated as in recent works on termination-preserving refinement~\cite{transfinite-iris} to ensure that looping in the program is matched with transitions in the model.

\item \emph{Couplings}, as in pRHL~\cite{barthe_formal_2008}, which allow for ``aligning'' the probabilistic transitions of the program and model.
\end{enumerate}
In the end, the logic may seem surprisingly---and perhaps suspiciously---simple, but this simplicity stems from the use of these powerful and well-tested abstractions.
The main challenge and novelty of \thereflog{}, then, lies in showing that this combination of rules is sound for proving probabilistic termination-preserving refinements, the question that we will turn to in \cref{sec:soundness}.
Before doing so, we first examine a way to make the coupling rule more flexible (\cref{sec:async}) and then explore a number of case studies using \thereflog{} (\cref{sec:examples}).


\section{Semantic Model and Soundness}\label{sec:soundness}

The soundness of \thereflog{} is justified by constructing a semantic model of $\rwpre{\expr}{\pred}$ in terms of the Iris ``base logic''~\cite{irisjournal}. 
This underlying base logic is a separation logic with the various connectives we saw in earlier sections, including the $\later$ modality and the L\"{o}b induction principle.
This section defines the semantic model and then shows how the model implies \cref{thm:soundness}.
For notational convenience, we write an inference rule with premises $\prop_{1}, \ldots, \prop_{n}$ and conclusion $\propB$ as notation for $(\prop_{1} \sep \ldots \sep \prop_{n}) \proves Q$ in the Iris base logic.

\subsection{Model}

The semantic model of the refinement weakest precondition $\rwpre{\expr}{\pred}$ constructs a coupling of the execution of $\mstate$ as tracked by the $\spec(\mstate)$ resource and the execution of the program $\expr$.
Intuitively, it does so by constructing individual stepwise couplings as the proof symbolically executes the program and the model.
In the end, these stepwise couplings will all be combined to construct a coupling of the full executions.

In contrast to many logics making use of guarded recursion, $\rwpre{\expr}{\pred}$ is defined as a \emph{least} fixed point.
This is reminiscent of models for weakest preconditions that ensure total program correctness but our simultaneous use of guarded recursion will permit non-termination in a controlled way. 
Formally, the least fixed point $\lfp x . t$ exists if $t$ is monotone, \ie{}, all recursive occurrences of $x$ appear in a positive position, as follows from Tarski's fixed-point theorem \cite{Tarski_1955}.
The definition looks as follows.
\begin{align*}
  \rwpre{\expr_{1}}{\pred} \eqdef{} \lfp W .
  & (\expr_{1} \in \Val \sep \pred(\expr_{1})) \lor{} \\
  & (\expr_{1} \not\in \Val \sep \All \mstate_{1}, \state_{1} . \specinterp(\mstate_{1}) \sep \stateinterp(\state_{1}) \wand{} \\
  & \quad \execCoupl{\mstate_{1}}{(\expr_{1}, \state_{1})}{\mstate_{2}, (\expr_{2}, \state_{2}) .\, \specinterp(\mstate_{2}) \sep \stateinterp(\state_{2}) \sep W(\expr_{2}, \pred)})
\end{align*}
The left disjunct of the definition says that if $\expr_{1}$ is a value, then the postcondition $\pred(\expr_{1})$ must hold.
Meanwhile, the right side says if $\expr_{1}$ is \emph{not} a value, then we get to assume ownership of two resources $\specinterp(\mstate_{1})$ and $\stateinterp(\state_{1})$ and have to prove a \emph{coupling precondition} $\execCoupl{\mstate_{1}}{(\expr_{1}, \state_{1})}{\ldots}$ as defined below.
The postcondition of the coupling precondition requires the prover to give back the two updated resources and show that $\rwpre{\expr_{2}}{\pred}$ holds recursively.

The two resources $\specinterp(\mstate_{1})$ and $\stateinterp(\state_{1})$ are, respectively, a model and a state \emph{interpretation}.
Formally, they track \emph{authoritative} views of the model and the state~\cite{iris}.
The model interpretation always agrees with the model state tracked with the $\spec(\mstate)$ resource, \ie{}, $\specinterp(\mstate) \sep \spec(\mstate') \proves \mstate = \mstate'$, and the state interpretation always agrees with the points-to connective $\progheap{\loc}{\val}$ for the heap, \ie{}, $\stateinterp(\state) \sep \progheap{\loc}{\val} \proves \state(\loc) = \val$.

The coupling precondition is the heart of the probabilistic program logic and ensures (1) the \emph{safety} of $(\expr_{1}, \state_{1})$, meaning that it does not get stuck, and (2) the existence of a relational coupling with the model.
The connective $\execCoupl{\mstate}{\cfg}{\predB}$ is a ternary relation on a model state $\mstate \in \Markov$, a program configuration $\cfg \in \Conf$, and a relational post condition $\predB : \Markov \to \Conf \to \iProp$ where $\iProp$ is the type of propositions in the logic.
Intuitively, it forms a relational coupling logic that establishes the existence of a probabilistic coupling of \emph{one} step of the program configuration $\cfg$ with a number of steps of the model $\mstate$ such that the postcondition $\predB$ holds for the support.
Formally, it is defined inductively by the inference rules shown in \cref{fig:cpre} (\ie{} as a least fixed point).

\begin{figure*}
  \centering
  \begin{mathpar}
    \inferH{cpl-prog}
    { \red(\cfg_1) \\
      \refRcoupl{\mret(\mstate)}{\stepdistr(\cfg_{1})}{R} \\
      \All s \in R . \predB(s)
    }
    {\execCoupl{\mstate}{\cfg_{1}}{\predB}}
    \and
    \inferH{cpl-model-prog}
    { \red(\cfg_1) \\
      \red(\mstate_1) \\
      \refRcoupl{\stepdistr(\mstate_1)}{\stepdistr(\cfg_1)}{R} \\
      \All s \in R . \later \predB(s)
    }
    {\execCoupl{\mstate_1}{\cfg_1}{\predB}}
    \and
    \inferH{cpl-model}
    { \red(\mstate_1) \\
      \refRcoupl{\stepdistr(\mstate_{1})}{\mret(\cfg)}{R} \\
      \All (\mstate_{2}, \cfg) \in R . \later \execCoupl{\mstate_2}{\cfg}{\predB}
    }
    {\execCoupl{\mstate_1}{\cfg}{\predB}}
  \end{mathpar}
  \caption{Inductive definition of the coupling precondition $\execCoupl{\mstate}{\cfg}{\predB}$.}
  \label{fig:cpre}
\end{figure*}

The first constructor \ruleref{cpl-prog} applies to symbolic steps that only progress the program.
It requires the configuration to be reducible and that there is a trivial coupling between the Dirac distribution of the model state $\mret(\mstate)$ and the program step, which just means that the program configuration can take a step.
Moreover, everything in the support of the coupling must satisfy the postcondition.
The constructor is essential to validating all the unary program logic rules shown in \cref{fig:program-logic}.

The second constructor \ruleref{cpl-model-prog} is used to validate \ruleref{rwp-coupl-rand}, the coupling rule in the logic.
It requires that the model and the configuration are reducible (to guarantee safety) and that a coupling can be exhibited between a step of the model and a step of the program.
Finally, everything in the support of the coupling must satisfy $\predB$ but \emph{under a later modality}.
This occurrence is one of the two places that formally connects the later modality to steps of the model.

The third and last constructor \ruleref{cpl-model} is used to validate \ruleref{rwp-spec-det}, and thus the symbolic steps which only progress the model.
It requires a trivial coupling of the Dirac distribution of the program configuration with the model step, which intuitively just means that the model can take a step.
For everything in the support of the coupling, the coupling precondition must hold recursively \emph{under a later modality}, again connecting the later modality to steps of the model.

The definition of the refinement weakest precondition consists of multiple interacting components: probabilistic couplings, later modalities, resources, and two fixed points.
It is this fine (and subtle!) balance of the components that allows us in the following section to prove that termination is indeed preserved across the program and the model, but it is also what allows us to enable the reasoning principles that we want.
For example, the fact that \emph{one} unfolding of the $\rwpre{\expr}{\pred}$ fixed point always corresponds to \emph{one} step of the program (but possibly multiple steps of the model) is crucial to stating and proving soundness of the rules in \cref{fig:program-logic}.

\subsection{Soundness}
We show soundness of \thereflog{} in two stages:

\begin{enumerate}
\item we show that the relational logic establishes a so-called ``plain'' guarded refinement $\refines{\mstate}{\cfg}$, \ie{} a guarded relation that does not depend on separation logic resources, and
\item we show that plain guarded refinement implies preservation of termination.
\end{enumerate}
The plain guarded refinement is defined inductively by the rules in \cref{fig:plain-ref}.
If the program has terminated the refinement trivially holds (\ruleref{ref-val}), we can step the program and the model independently (\ruleref{ref-prog} and \ruleref{ref-model}, respectively), and we can incorporate non-trivial couplings of model and program steps (\ruleref{ref-model-prog}).
In the two cases where we progress the model (\ruleref{ref-model} and \ruleref{ref-model-prog}), the recursive occurrence of the refinement is under a later modality, thus connecting the later modality to steps of the model as in the definition of the relational logic.

\begin{figure*}
  \centering
  \begin{mathpar}
    \inferH{ref-val}
    {\val \in \Val}
    {\refines{\mstate}{(\val, \state)}}
    \and
    \inferH{ref-prog}
    { \red(\cfg_1) \\
      \refRcoupl{\mret(\mstate)}{\stepdistr(\cfg_1)}{R} \\
      \All (\mstate, \cfg_2) \in R . \refines{\mstate}{\cfg_2}
    }
    {\refines{\mstate}{\cfg_1}}
    \and
    \inferH{ref-model-prog}
    { \red(\cfg_1) \\
      \red(\mstate_1) \\
      \refRcoupl{\stepdistr(\mstate_1)}{\stepdistr(\cfg_2)}{R} \\
      \All (\mstate_2, \cfg_2) \in R . \later \refines{\mstate_2}{\cfg_2}
    }
    {\refines{\mstate_1}{\cfg_1}}
    \and
    \inferH{ref-model}
    { \red(\mstate_1) \\
      \refRcoupl{\stepdistr(\mstate_1)}{\mret(\cfg)}{R} \\
      \All (\mstate_2, \cfg) \in R . \later \refines{\mstate_2}{\cfg}
    }
    {\refines{\mstate_1}{\cfg}}
  \end{mathpar}
  \caption{Inductive definition of the (plain) guarded refinement relation $\refines{\mstate}{\cfg}$.}
  \label{fig:plain-ref}
\end{figure*}

The first stage of the soundness proof is the following lemma.
\begin{lemma}\label{lem:rwp-refines}
  If $\spec(\mstate) \proves \rwpre{\expr}{\pred}$ then $\proves \refines{\mstate}{(\expr, \state)}$ for all $\state$.
\end{lemma}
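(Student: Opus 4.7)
My plan is to prove a strengthened Iris-internal statement generalizing the lemma, and then obtain the external refinement by initializing ghost state. The strengthened statement is
\[
  \specinterp(\mstate) \sep \stateinterp(\state) \sep \rwpre{\expr}{\pred} \proves \refines{\mstate}{(\expr, \state)},
\]
read as an Iris entailment with $\refines{\mstate}{(\expr,\state)}$ viewed as a proposition whose only non-trivial logical structure is the later modality. From this the lemma follows by standard Iris ghost-state allocation: I allocate from nothing $\specinterp(\mstate) \sep \spec(\mstate)$ and $\stateinterp(\state)$, then feed $\spec(\mstate)$ into the assumption $\spec(\mstate) \proves \rwpre{\expr}{\pred}$ to produce the missing $\rwpre{\expr}{\pred}$, and apply the generalization.

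The proof of the generalization uses two nested rule inductions. The outer one is on the least-fixed-point definition of $\rwpre{\expr}{\pred}$, yielding the induction hypothesis
\[
  \All \mstate', \state', \expr'.\; \specinterp(\mstate') \sep \stateinterp(\state') \sep W(\expr', \pred) \wand \refines{\mstate'}{(\expr', \state')}
\]
for the abstract predicate $W$ appearing in the body. The value case is immediate from \ruleref{ref-val}. In the non-value case I consume $\specinterp(\mstate) \sep \stateinterp(\state)$ and instantiate the universal quantifier to obtain a coupling precondition $\execCoupl{\mstate}{(\expr,\state)}{Q}$ whose postcondition is $Q(\mstate',(\expr',\state')) = \specinterp(\mstate') \sep \stateinterp(\state') \sep W(\expr',\pred)$.

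I then perform rule induction on $\execCoupl$ with $\predB = Q$ fixed. The three constructors align one-to-one with the non-value rules of $\refines{}{}$: \ruleref{cpl-prog} discharges \ruleref{ref-prog}, \ruleref{cpl-model-prog} discharges \ruleref{ref-model-prog}, and \ruleref{cpl-model} discharges \ruleref{ref-model}. The reducibility and coupling side-conditions transfer verbatim. For \ruleref{cpl-prog} I extract $Q$ at each support element and close by the outer rwpre induction hypothesis. For \ruleref{cpl-model-prog} the postcondition is guarded by a later, and applying \ruleref{later-mono} to the outer induction hypothesis yields $\later \refines{\mstate_2}{(\expr_2,\state_2)}$ as required. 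For \ruleref{cpl-model} the recursive occurrence of $\execCoupl$ is guarded by a later, and the rule-induction hypothesis for $\execCoupl$ directly delivers $\later \refines{\mstate_2}{\cfg}$.

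The main obstacle is aligning the two inductions so they cooperate: the inner one operates at a fixed postcondition, while the outer one must provide exactly the fact needed to turn occurrences of $W$ inside that postcondition into the desired refinement. Once this alignment is in place, the laters in $\execCoupl$ coincide one-for-one with the laters in $\refines{}{}$, and the whole argument goes through without any use of L\"{o}b induction or ad hoc stripping of modalities.
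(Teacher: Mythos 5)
Your proposal is correct and follows essentially the same route as the paper: a structural induction on the least-fixed-point definition of $\rwpre{\expr}{\pred}$ nested with a rule induction on $\execCoupl{\mstate}{\cfg}{\predB}$, matching each constructor of the coupling precondition one-to-one with a constructor of $\refines{\mstate}{\cfg}$. The only difference is that you spell out the ghost-state allocation and the strengthened internal entailment explicitly, details the paper deliberately elides by omitting the fancy update modality.
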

The proof goes by structural induction in both the weakest precondition $\rwpre{\expr}{\pred}$ and the coupling precondition $\execCoupl{\mstate}{\cfg}{\predB}$ fixed points.
While the details of how resources are erased depends on how they are managed in Iris (\eg{}, through the fancy update modality, which we have omitted), for intuition about why this should hold, observe that \emph{if} we ignore the model and state interpretation resources, each case of $\rwpre{\expr}{\pred}$ and the constructors of $\execCoupl{\mstate}{\cfg}{\predB}$ correspond exactly to one constructor of the $\refines{\mstate}{\cfg}$ refinement relation.

The core of the soundness proof is the second stage and the fact that the $\refines{\mstate}{\cfg}$ relation preserves termination.
The key enabler is the monotone convergence of termination probability (\cref{lem:exec-mct}), that is, to show that the termination probability of the program is bounded below by the termination probability of the model, it suffices to consider all finite prefixes of the model execution---exactly what our guarded refinement relation is concerned with.
This in turn means that to show that termination is preserved, we ``just'' have to combine the stepwise left-partial couplings constructed in the refinement relation into a single coupling of executions.
The ability to combine couplings in this way follows from the following lemma showing that left-partial couplings can be composed along the monadic structure of sub-distributions:
 \begin{lemma}[Composition of couplings]\label{lem:coupling-comp}
   Let $R \subseteq A \times B$, $S \subseteq A' \times B'$, $\mu_{1} \in \Distr{A}$, $\mu_2 \in \Distr{B}$, $f_{1} : A \to \Distr{A'}$, and $f_{2} : B \to \Distr{B'}$.
   \begin{enumerate}
   \item If $(a, b) \in R$ then $\refRcoupl{\mret(a)}{\mret(b)}{R} $.
   \item If $\refRcoupl{\mu_{1}}{\mu_{2}}{R}$ and for all $(a, b) \in R$ it is the case that $\refRcoupl{f_{1}(a)}{f_{2}(b)}{S}$ then $\refRcoupl{\mu_{1} \mbindi f_{1}}{\mu_{2} \mbindi f_{2}}{S}$
   \end{enumerate}
 \end{lemma}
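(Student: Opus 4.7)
For part (1), the plan is to take the Dirac witness $\mret((a,b)) \in \Distr{A \times B}$. Its support is the singleton $\{(a,b)\} \subseteq R$, its left marginal is exactly $\mret(a)$, satisfying the equality in condition (1) of \cref{def:coupling}, and its right marginal is exactly $\mret(b)$, which trivially satisfies the at-most inequality in condition (2). So all three requirements hold on the nose.

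For part (2), I would construct the composite coupling by gluing along the monadic structure of sub-distributions. Let $\mu$ be a left-partial $R$-coupling of $\mu_1$ and $\mu_2$, and for each $(a,b) \in R$ let $\nu_{a,b}$ be a left-partial $S$-coupling of $f_1(a)$ and $f_2(b)$, extended by $\mathbf{0}$ on pairs outside $R$, which is harmless since $\mu(a,b) = 0$ there. Define
\[
  \kappa \eqdef{} \mu \mbindi (\lambda (a,b).\, \nu_{a,b}),
\]
so that $\kappa(a', b') = \sum_{(a,b)} \mu(a, b) \cdot \nu_{a,b}(a', b')$. Verifying that $\kappa$ is a left-partial $S$-coupling of $\mu_1 \mbindi f_1$ and $\mu_2 \mbindi f_2$ then reduces to three checks: the support is contained in $S$, because $\kappa(a', b') > 0$ forces some $(a,b) \in \supp(\mu) \subseteq R$ with $(a', b') \in \supp(\nu_{a,b}) \subseteq S$; the left marginal equality holds since $\sum_{b'} \nu_{a,b}(a', b') = f_1(a)(a')$ aggregates with $\sum_b \mu(a,b) = \mu_1(a)$ to give $(\mu_1 \mbindi f_1)(a')$; and the right marginal inequality chains the two $\leq$ bounds $\sum_{a'} \nu_{a,b}(a', b') \leq f_2(b)(b')$ and $\sum_a \mu(a,b) \leq \mu_2(b)$ to obtain $(\mu_2 \mbindi f_2)(b')$.

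The main technical care-point is the interchange of summation used in computing each marginal: because we are working over a countable product with non-negative summands, Tonelli's theorem applies and the swap is valid with no convergence concerns. I would also highlight that the asymmetry of \cref{def:coupling} is what makes part (2) compose cleanly: both levels of the construction have their right-marginal inequalities pointing the same way, so chaining them preserves the $\leq$. A symmetric, equality-based notion of coupling would break here whenever some $\nu_{a,b}$ fails to be full mass, which is precisely why the left-partial variant is the right abstraction for the soundness argument that this lemma is set up to support.
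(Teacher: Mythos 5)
Your proof is correct and takes exactly the approach the paper intends: the paper omits an explicit proof (deferring to the Coq mechanization) but describes the lemma as composing couplings ``along the monadic structure of sub-distributions,'' which is precisely your construction $\kappa = \mu \mbindi (\lambda(a,b).\,\nu_{a,b})$, with the Dirac witness for part~(1). Your handling of the marginal computations, the extension by $\mathbf{0}$ outside $R$, and the observation that the asymmetric right-marginal inequalities chain cleanly are all accurate and complete.
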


Using this lemma, we have the following:
\begin{lemma}\label{lem:refines-soundness-laterN}
  $\refines{\mstate}{\cfg} \proves \later^{n} \refcoupl{\exec_{n}(\mstate)}{\exec(\cfg)}$ for all $n$.
\end{lemma}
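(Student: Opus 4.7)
The plan is to prove the statement by induction on $n$, with a secondary structural induction on the derivation of $\refines{\mstate}{\cfg}$ (which is admissible since $\refines$ is defined as a least fixed point) to handle those branches in which the model does not step. The key analytical tool is \cref{lem:coupling-comp}, which lets us glue the stepwise couplings carried by $\refines$ into a coupling of $\exec_{n}(\mstate)$ with $\exec(\cfg)$; the $\later^{n}$ in the goal is exactly the number of ``model clicks'' that we will need to absorb along the way.

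For the base case $n = 0$, split on $\tofinal(\mstate)$. If $\neg\tofinal(\mstate)$ then $\exec_{0}(\mstate) = \mathbf{0}$ and the all-zero sub-distribution is a trivial left-partial coupling. If $\tofinal(\mstate)$ then $\exec_{0}(\mstate) = \mret(\mstate)$, and since $\tofinal(\mstate)$ rules out $\red(\mstate)$ only \ruleref{ref-val} and \ruleref{ref-prog} can have produced $\refines{\mstate}{\cfg}$; the inner structural induction on the derivation concludes immediately for \ruleref{ref-val} (where $\exec(\cfg) = \mret(\cfg)$ has unit mass), and for \ruleref{ref-prog} uses $\exec(\cfg_{1}) = \stepdistr(\cfg_{1}) \mbindi \exec$ together with \cref{lem:coupling-comp} to compose the constructor's coupling of $\mret(\mstate)$ with $\stepdistr(\cfg_{1})$ against the inner IH. For the inductive step $n = k+1$, case-analyze the constructor of $\refines{\mstate}{\cfg}$. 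The \ruleref{ref-val} case is again immediate. The \ruleref{ref-prog} case leaves the model fixed, so we reuse exactly the inner-induction-plus-composition argument from the base case. In \ruleref{ref-model-prog} and \ruleref{ref-model} the model advances; since $\red(\mstate_{1})$ excludes $\tofinal(\mstate_{1})$ we may unfold $\exec_{k+1}(\mstate_{1}) = \stepdistr(\mstate_{1}) \mbindi \exec_{k}$ (and, in \ruleref{ref-model-prog}, also $\exec(\cfg_{1}) = \stepdistr(\cfg_{1}) \mbindi \exec$), then apply \cref{lem:coupling-comp} to join the constructor's step-coupling with the outer IH at index $k$. The single $\later$ guarding each recursive premise of the constructor combines with the $\later^{k}$ contributed by the IH to yield the required $\later^{k+1}$, using that $\later$ is monotone and commutes with $\forall$ and $\sep$.

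The main obstacle is bookkeeping the later modalities against the two layers of induction. In the model-stepping constructors the single $\later$ stripped by the rule must be matched precisely to the shift from $\later^{k+1}$ in the goal down to the $\later^{k}$ supplied by the outer IH; this is what turns stepwise guarded refinement into a coupling of $n$ model steps against the full program execution. The subtler point is that \ruleref{ref-prog} neither consumes a $\later$ nor decrements $n$, so all progress in that branch must come from the secondary structural induction on $\refines$ -- without this inner induction the outer induction on $n$ alone would fail to terminate those branches in which the program takes arbitrarily many pure or heap-manipulating steps between model transitions.
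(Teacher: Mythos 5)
Your proof is correct and follows essentially the same route as the paper's: case analysis on the constructors of the guarded refinement relation, composing the stepwise couplings via \cref{lem:coupling-comp} after unfolding $\exec$ and $\exec_n$, and absorbing exactly one $\later$ per model step. The only difference is organizational: the paper runs a single structural induction on the $\refines{\mstate}{\cfg}$ fixed point with $n$ universally quantified in the induction predicate (instantiated at the same $n$ in \ruleref{ref-prog} and at $n-1$ in the model-stepping cases), whereas you transpose this into an outer induction on $n$ with an inner structural induction handling the program-only steps --- the two formulations are interchangeable here.
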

\begin{proof}
  The proof proceeds by induction on the $\refines{\mstate}{\cfg}$ fixed point.

  \begin{description}
  \item[Case \ruleref{ref-val}.]
    Since $\rho = (\val, \state)$ and $\val \in \Val$ we get that $\exec(\val, \state) = \mret(\val)$.
    As $\refcoupl{\mu}{\mret(\val)}$ trivially holds for any $\mu$ (pick a coupling that relates all the mass of $\mu$ to $\val$), we conclude.
  \item[Case \ruleref{ref-prog}.]
    Since $\cfg$ is reducible, we get that $\exec(\cfg) = \stepdistr(\cfg) \mbindi \exec$.
    By the left identity law of the distribution monad, $\exec_{n}(\mstate) = \mret(\mstate) \mbindi \exec_{n}$.
    We are left with the goal
    \[\textstyle\later^{n} (\refcoupl{\mret(\mstate) \mbindi \exec_{n}}{\stepdistr(\cfg) \mbindi \exec}). \]
    Using \ruleref{later-mono} we apply \cref{lem:coupling-comp} under the later modalities and exploit the coupling $\refRcoupl{\mret(\mstate)}{\stepdistr(\cfg_1)}{R}$ which leaves us with the goal
    \[\textstyle\later^{n} \All (\mstate, \cfg_{2}) \in R . \refcoupl{\exec_{n}(\mstate)}{\exec(\cfg_{2})} \]
    which follows by the induction hypothesis.
  \item[Case \ruleref{ref-model-prog}.]
    We do a case distinction on $n$.
    If $n = 0$ then $\exec_{0}(\mstate) = \mathbf{0}$ and thus the left-partial coupling exists trivially.
    If $n \neq 0$ then $\exec_{n}(\mstate) = \stepdistr(\mstate) \mbindi \exec_{n - 1}$ and since $\cfg$ is reducible, we get that $\exec(\cfg) = \stepdistr(\cfg) \mbindi \exec$.
    This leaves us with the goal
    \[\textstyle\later^{n} (\refcoupl{\stepdistr(\mstate) \mbindi \exec_{n-1}}{\stepdistr(\cfg) \mbindi \exec}) \]
    which follows as above by \ruleref{later-mono}, \cref{lem:coupling-comp}, and the induction hypothesis.
  \item[Case \ruleref{ref-model}.]
    We do a case distinction on $n$.
    If $n = 0$ then $\exec_{0}(\mstate) = \mathbf{0}$ and thus the left-partial coupling exists trivially.
    If $n \neq 0$ then $\exec_{n}(\mstate) = \stepdistr(\mstate) \mbindi \exec_{n - 1}$ and by the left identity law $\exec(\cfg) = \mret(\cfg) \mbindi \exec$.
    This leaves us with the goal
    \[\textstyle\later^{n} (\refcoupl{\stepdistr(\mstate) \mbindi \exec_{n-1}}{\mret(\cfg) \mbindi \exec}) \]
    which follows as above by \ruleref{later-mono}, \cref{lem:coupling-comp}, and the induction hypothesis.
  \end{description}
  \vspace{-1.2em}
\end{proof}

The result shows that the desired coupling exists, but this existence is \emph{internal} to the Iris base logic, and under $n$ iterations of the $\later$ modality.
At this point we rely on the following soundness theorem for the Iris base logic to know that the coupling exists externally in the meta-logic:
\begin{theorem}\label{thm:si-soundness}
  Let $\varphi$ be a meta-logic proposition.
  If $\proves \later^n \varphi$ then $\varphi$ holds in the meta-logic.
\end{theorem}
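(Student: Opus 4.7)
The plan is to appeal directly to the standard step-indexed soundness result for the Iris base logic, of which this theorem is essentially a restatement. Recall that the base logic is interpreted in a step-indexed model: each proposition $\psi$ has a truth value at every step index $k \in \mathbb{N}$, the later modality satisfies $\later \psi$ at step $k{+}1$ iff $\psi$ holds at step $k$ (and $\later \psi$ is vacuously true at step $0$), and a meta-logic proposition $\varphi$ is embedded into the base logic via the pure embedding, which has the defining property that its interpretation is step-index independent: it holds at any step $k$ iff $\varphi$ holds in the meta-logic.

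First I would invoke the base-logic adequacy theorem, which states that $\proves \psi$ implies $\psi$ is valid at every step index. Instantiating this with our hypothesis $\proves \later^{n}\varphi$ yields validity of $\later^{n}\varphi$ at step index $n$. Next I would unfold the semantics of $\later$ exactly $n$ times, which reduces the claim to $\varphi$ being true at step $0$. Finally, appealing to the step-index independence of the pure embedding gives that $\varphi$ holds in the meta-logic, which is what we wanted.

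There is no substantive obstacle here: the theorem is a well-known building block of the Iris framework and is already available as a top-level lemma in the Iris Coq development, so in the mechanization we simply discharge the goal via that lemma. The only point that requires any care is ensuring that $\varphi$ is treated as a genuine meta-logic proposition (\emph{i.e.}, embedded via the pure embedding rather than an arbitrary proposition of the logic), since only then is the interpretation step-index independent; for an arbitrary $\psi$, $\proves \later^n \psi$ would not in general imply $\proves \psi$. All the real work in the soundness argument for \thereflog{} has already been performed upstream in Lemma \ref{lem:refines-soundness-laterN}, which packages the termination bound under exactly the right number of laters to make this meta-theorem directly applicable.
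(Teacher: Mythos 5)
Your proposal is correct and matches the paper's treatment: the paper states \cref{thm:si-soundness} as the standard soundness/adequacy theorem of the Iris base logic and gives no further proof, relying on exactly the step-indexed-model argument (validity at every index, unfolding $\later$ $n$ times, and step-index independence of the pure embedding) that you sketch. Your remark that this only works because $\varphi$ is a pure meta-logic proposition, not an arbitrary assertion of the logic, is precisely the right caveat.
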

\begin{corollary}\label{cor:refines-soundness}
  If $\proves \refines{\mstate}{\cfg}$ then $\refcoupl{\exec_{n}(\mstate)}{\exec(\cfg)}$ for all $n$.
\end{corollary}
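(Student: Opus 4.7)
The plan is to derive \cref{cor:refines-soundness} as a near-immediate consequence of the two preceding results, \cref{lem:refines-soundness-laterN} and \cref{thm:si-soundness}. Fix an arbitrary $n \in \mathbb{N}$. From the hypothesis $\proves \refines{\mstate}{\cfg}$, I would first apply \cref{lem:refines-soundness-laterN} (instantiated at this $n$) to obtain the internal judgment
\[
  \proves \later^{n} \refcoupl{\exec_{n}(\mstate)}{\exec(\cfg)}.
\]

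Next, I would appeal to \cref{thm:si-soundness} to extract this into a meta-logic claim. To do so, I need to observe that $\refcoupl{\exec_{n}(\mstate)}{\exec(\cfg)}$ is a plain meta-logic proposition, as it asserts the existence of a sub-distribution $\mu \in \Distr{\Markov \times \Conf}$ whose marginals satisfy the inequalities in \cref{def:coupling}; these are purely quantitative statements about meta-level sub-distributions and involve no separation logic resources or step-indexed connectives. Hence \cref{thm:si-soundness} applies and yields $\refcoupl{\exec_{n}(\mstate)}{\exec(\cfg)}$ externally. Since $n$ was arbitrary, this is precisely the desired conclusion.

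I do not anticipate any real obstacle here; the genuine work has already been done in \cref{lem:refines-soundness-laterN}, which internalizes the stepwise construction and composition of left-partial couplings via \cref{lem:coupling-comp}. The only subtle point to double-check is the applicability of \cref{thm:si-soundness}: namely, that $\refcoupl{\cdot}{\cdot}$ does not secretly quantify over Iris resources or iProp-valued predicates. Given the formulation in \cref{def:coupling}, which speaks only of countable sums of real numbers, this is clear. Thus the corollary follows by the two-line composition outlined above.
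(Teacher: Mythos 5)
Your proposal matches the paper's proof exactly: the corollary is obtained by composing \cref{lem:refines-soundness-laterN} with \cref{thm:si-soundness}, and your additional check that $\refcoupl{\exec_{n}(\mstate)}{\exec(\cfg)}$ is a plain meta-logic proposition is precisely the side condition needed for \cref{thm:si-soundness} to apply. No gaps.
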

\begin{proof}
  Immediate by applying \cref{thm:si-soundness} and \cref{lem:refines-soundness-laterN}.
\end{proof}
\begin{corollary}\label{cor:rwp-refcoupl}
  If $\spec(\mstate) \proves \rwpre{\expr}{\pred}$ then $\refcoupl{\exec_{n}(\mstate)}{\exec(\expr, \state)}$ for all $n$ and $\state$.
\end{corollary}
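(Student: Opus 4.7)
The plan is to obtain this corollary as a straightforward composition of the two main results already established: \cref{lem:rwp-refines}, which extracts a plain guarded refinement from a refinement weakest precondition, and \cref{cor:refines-soundness}, which turns a plain guarded refinement into a left-partial coupling of the stratified model execution with the full program execution. Since both of these do the heavy lifting, there is essentially no new work to carry out here.

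Concretely, I would first fix an arbitrary program state $\state$ and an arbitrary $n$, so that it suffices to exhibit $\refcoupl{\exec_{n}(\mstate)}{\exec(\expr,\state)}$ for this particular choice. From the hypothesis $\spec(\mstate) \proves \rwpre{\expr}{\pred}$ I would then invoke \cref{lem:rwp-refines} to obtain $\proves \refines{\mstate}{(\expr,\state)}$, using the fact that \cref{lem:rwp-refines} applies for all program states. Next, I would instantiate \cref{cor:refines-soundness} with $\cfg \eqdef{} (\expr,\state)$ to conclude $\refcoupl{\exec_{n}(\mstate)}{\exec(\expr,\state)}$. Since $n$ and $\state$ were arbitrary, this gives the statement in full generality.

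There is no real obstacle: the two intermediate lemmas already bridge the gap between the high-level logical judgment and the meta-level coupling statement, and this corollary is essentially a naming convenience that packages them together for use in the final termination-preservation argument (where it will be combined with \cref{lem:exec-mct} and \cref{lem:coupling-mass} to conclude \cref{thm:soundness}). The only thing to be careful about is the order of quantifiers: \cref{lem:rwp-refines} gives the plain refinement for \emph{every} $\state$, which matches the ``for all $n$ and $\state$'' quantification in the conclusion, so nothing subtle happens when swapping between the two formulations.
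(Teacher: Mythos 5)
Your proposal is correct and matches the paper's proof exactly: the paper likewise obtains the corollary immediately by composing \cref{lem:rwp-refines} with \cref{cor:refines-soundness}. Nothing further is needed.
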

\begin{proof}
  Immediate by applying \cref{cor:refines-soundness} and \cref{lem:rwp-refines}.
\end{proof}

The soundness theorem of \thereflog{} (\cref{thm:soundness}) then follows directly by applying \cref{lem:exec-mct}, \cref{{lem:coupling-mass}}, and \cref{cor:rwp-refcoupl}.

\paragraph{Asynchronous couplings}
To incorporate asynchronous couplings into \thereflog{}, we add a fourth constructor to the coupling precondition that adds the possibility of coupling a model step with any number of presampling steps.
The distribution $\foldM(\statestepdistr, \state_{1}, l)$ denotes a monadic fold of $\statestepdistr$ over the list $l$ of tape labels using $\state_{1}$ as the initial value.
\begin{mathpar}
  \infer
  { \red(\mstate_1) \\
    l \subseteq \dom(\state_1) \\
    \refRcoupl{\stepdistr(\mstate_{1})}{\foldM(\statestepdistr, \state_{1}, l)}{R} \\
    \All (\mstate_{2}, \state_{2}) \in R . \later \execCoupl{\mstate_2}{(\expr_1, \state_2)}{\predB}
  }
  {\execCoupl{(\expr_1, \state_1)}{\mstate_1}{\predB}}
\end{mathpar}
The state interpretation is extended accordingly to give meaning to the $\progtape{\lbl}{\tapebound}{\tape}$ resource.

As for the coupling precondition we also extend the plain guarded refinement relation in \cref{fig:plain-ref}.
The soundness theorem can then adapted by making use of \cref{lem:erasure} to erase presampling steps.

\subsection{Comparison to Guarded Recursion for Non-Probabilistic Termination Preservation} \label{sec:sound-compare}

As mentioned earlier, prior works have also built program logics for termination-preserving refinements using
guarded recursion~\citep{DBLP:conf/esop/TassarottiJ017, trillium, transfinite-iris}.
These works target non-probabilistic languages that instead have (adversarial) non-determinism.

In terms of logical rules, these works use similar core mechanisms of (1) representing a specification program or model as ghost state, (2) reasoning about loops using L\"{o}b induction, and (3) only allowing the later modality to be eliminated when the specification program takes a step.
\thereflog{} differs primarily in that its coupling rule requires the resolution of probabilistic non-determinism between the program and the model to have corresponding probabilities, whereas in these prior works, non-determinism at the model level is resolved angelically.
Of course, the resulting soundness theorems for these prior logics also differ.
They say that if the program has a non-terminating execution, then the model must also have a non-terminating execution.\footnote{\citet{DBLP:conf/esop/TassarottiJ017} and \citet{trillium} consider a concurrent language and allow for a stronger property, requiring that if the non-terminating execution in the program was under a \emph{fair} scheduler, then the execution in the model must also be fair. This imposes some restrictions on how non-determinism at the model is resolved, so that it is not entirely angelic.}

The other key difference is that to prove their soundness theorems, these prior works have found it necessary to either move to transfinite step indexing~\citep{DBLP:conf/esop/SvendsenSB16}, or to require some kind of \emph{finiteness} condition, \eg require that models be \emph{finitely branching} (meaning that each state can only move to finitely many states in a single transition) or relative image-finiteness of the refinement relation.
\thereflog{}'s soundness proof requires no such restriction.
The model uses ``standard'' natural number step indexing, and a Markov chain model is allowed to have countable branching, since \cref{def:markov-chain} permits the support of the chain's $\stepdistr$ function to be countable.

One might wonder why these technical workarounds were not needed in the probabilistic case, and whether \thereflog{}'s soundness proof implies that something similar could be done in the non-deterministic case as well.
The answer lies in \cref{lem:exec-mct}, which shows that the termination probability of a program can be lower-bounded by considering the termination probability of its $n$-step finite approximations.
This theorem was used in the soundness proof of \thereflog{}, allowing us in \cref{cor:rwp-refcoupl} to consider executions of up to $n$ steps of the model for each $n$.
Only considering $n$-step executions was important because these corresponded to the up to $n$ iterations of the later modality incurred when unfolding the guarded recursion defining the coupling precondition.

A corresponding proof approach does not work in the context of non-probabilistic termination, because there is no useful analogue of \cref{lem:exec-mct}.
In general, knowing that for all $n$, a non-deterministic program has an execution that has not terminated after $n$ steps does not imply that it necessarily has a diverging execution.
Thus, the soundness proofs in the aforementioned logics cannot use the approach of considering $n$ step unfoldings of guarded recursion that we used above.


\section{Asynchronous Couplings}\label{sec:async}

Using the probabilistic coupling rules we have seen so far requires aligning or ``synchronizing'' the sampling statements of the two probabilistic processes being related.
For example, both the program and its model have to be executing the sample statements we want to couple for their next step when applying rules like \ruleref{rwp-coupl-rand}.
However, it is not always possible to synchronize sampling statements in this way, especially when considering higher-order programs.
To address this issue, \citet{clutch} introduce \emph{asynchronous coupling} for proving contextual refinement of (higher-order) probabilistic programs.
We identify two new and orthogonal use cases for asynchronous coupling in \thereflog{} which address the following two issues:
\begin{enumerate}
\item When relating a complex program to a simpler model, it is sometimes necessary to couple \emph{one} model step to \emph{multiple} non-adjacent program samplings (as illustrated in \cref{sec:lazy-real}).
\item Sometimes a later modality needs to be eliminated \emph{now}, but the coupling step---which would introduce the later modality---only happens \emph{in the future} (as illustrated in \cref{sec:gwt}).
\end{enumerate}
In the remainder of this section, we recall the concept of asynchronous coupling and describe how it is incorporated into \thereflog{}.
The reader may want to initially skip this section but return before reading \cref{sec:lazy-real} and \cref{sec:gwt}.

\paragraph{Presampling tapes.}
Asynchronous couplings are introduced through dynamically-allocated \emph{presampling tapes}.
Intuitively, presampling tapes will allow us \emph{in the logic} to presample (and in turn couple) the outcome of future sampling statements.

Formally, presampling tapes appear as two new constructs added to the programming language.
\begin{align*}
  \expr \in \Expr \bnfdef{}& \ldots \ALT
                             \AllocTape~\expr \ALT
                             \Rand \expr_{1}~\expr_{2}
\end{align*}
The $\AllocTape~\tapebound$ operation allocates a new fresh tape with the upper bound $\tapebound$, representing future outcomes of $\Rand \tapebound$ operations.
The $\Rand$ primitive can now (optionally) be annotated with a tape label $\lbl$.
If the corresponding tape is empty, $\Rand~\tapebound~\lbl$ reduces to any $n \leq \tapebound$ with equal probability, just as if it had not been labeled.
But if the tape is \emph{not} empty, then $\Rand~\tapebound~\lbl$ reduces \emph{deterministically} by taking off the first element of the tape and returning it.
However, \emph{no} primitives in the language will add values to the tapes.
Instead, values are added to tapes as part of presampling steps that will be \emph{ghost operations} appearing only in the logic.
In fact, labeled and unlabeled samplings operations are contextually equivalent \cite{clutch}.

At the logical level, presampling tapes come with a $\progtape{\lbl}{\tapebound}{\tape}$ assertion that denotes \emph{ownership} of the label $\lbl$ and its contents $(\tapebound, \tape)$, analogously to how the traditional points-to-connective $\progheap{\loc}{\val}$ of separation logic denotes ownership of the location $\loc$ and its contents on the heap.
When a tape is allocated, ownership of a fresh empty tape is acquired, \ie{}
\begin{align*}
  \All \lbl . \progtape{\lbl}{\tapebound}{\nil} \wand \pred(\lbl) & \proves \rwpre{\AllocTape \tapebound}{\pred} \ruletag{rwp-tape-alloc}
\end{align*}
If one owns $\progtape{\lbl}{\tapebound}{\nil}$, \ie{}, when the corresponding tape is empty, then $\Rand~\tapebound~\lbl$ reduces symbolically to any $n \leq \tapebound$, reflecting the operational behavior described above:
\begin{align*}
  (\All n \leq \tapebound . \progtape{\lbl}{\tapebound}{\nil} \wand \pred(n)) \sep \progtape{\lbl}{\tapebound}{\nil} &\proves \rwpre{\Rand \tapebound~\lbl}{\pred} \ruletag{rwp-tape-empty}
\end{align*}
When the tape is \emph{not} empty, then $\Rand~\tapebound~\lbl$ reduces symbolically by taking off the first element of the tape and returning it.
\begin{align*}
  (\progtape{\lbl}{\tapebound}{\tape} \wand \pred(n)) \sep \progtape{\lbl}{\tapebound}{n \cons \tape} &\proves \rwpre{\Rand \tapebound~\lbl}{\pred} \ruletag{rwp-tape}
\end{align*}
Asynchronous couplings can now be introduced in the logic by coupling rules that couple \emph{any} finite number of presampling steps onto tapes with a model step.
When we---at some point in the future---reach a presampled $\Rand~\tapebound~\lbl$ operation, we simply read off the presampled values from the $\lbl$ tape deterministically in a first-in-first-out order.
For example, an asynchronous variant of the coupling rule for $\Flip$ originally shown in \cref{sec:rellog} is the following:
\begin{align*}
  \inferrule{
    m_f \neq m_t \\\\
    \stepdistr(m)(m_f) = 1/2 \\
    P \sep \spec(m_f) \sep \progtape{\lbl}{1}{\vec{b} \cdot \False} \vdash \rwpre{\expr}{\pred} \\\\
    \stepdistr(m)(m_t) = 1/2 \\
    P \ast \spec(m_t) \sep \progtape{\lbl}{1}{\vec{b} \cdot \True} \vdash \rwpre{\expr}{\pred}
  }
  {\later P \sep \progtape{\lbl}{1}{\vec{b}} \sep \spec(m) \vdash \rwpre{\expr}{\pred}}
\end{align*}
Instead of reasoning about two cases for how a $\Flip$ operation resolves, we reason about two cases for how a Boolean is sampled onto the tape $\lbl$.
This pattern can be generalized to allow one model step to be coupled with multiple presampling steps, which we exploit in \cref{sec:lazy-real}.
Notice moreover that since a model step is taken, we are also allowed to eliminate a later modality from the assumption $\prop$ \emph{before} reaching the $\Flip$ operation.
This fact will be crucial for the example considered in \cref{sec:gwt}.

Soundness of asynchronous couplings hinges on the fact that presampling operationally \emph{does not matter} as seen by  the erasure theorem below.
The distribution $\statestepdistr(\state_{1}, \lbl)$ appends a uniformly sampled value to the end of the $\lbl$ tape in $\state_{1}$.
\begin{lemma}[Erasure]\label{lem:erasure}
  If $\lbl \in \dom(\state_{1})$ then
  $ \exec(\expr, \state_{1}) =  \statestepdistr(\state_{1}, \lbl) \mbindi (\Lam \state_{2} . \exec(\expr, \state_{2}))$.
\end{lemma}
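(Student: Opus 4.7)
The plan is to reduce the claim to its stratified version and proceed by induction on the step index. Concretely, using the definition of $\exec$ as the pointwise limit of $\exec_n$, it suffices to prove the equality
\[
  \exec_n(\expr, \state_1) = \statestepdistr(\state_1, \lbl) \mbindi (\Lam \state_2.\, \exec_n(\expr, \state_2))
\]
for every $n$, and then to pass to the limit on both sides. Since $\statestepdistr(\state_1, \lbl)$ has finite support (it is uniform on $\{0,\dots,\tapebound\}$ lifted into states), pointwise limits commute with the outer bind, and the desired equality on $\exec$ follows from the monotonicity used in \cref{lem:exec-mct}.

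For the induction, the base case $n = 0$ reduces to checking that appending to the end of a tape does not alter which configurations are terminated and with what value. The inductive step hinges on a single-step commutation property: appending a value $v$ to the back of tape $\lbl$ commutes with one step of $\stepdistr$, \ie{} stepping after appending yields the same distribution over configurations as stepping first and then appending $v$ to each resulting state. This commutation is established by case analysis on the head redex of $\expr$: every non-tape reduction leaves tape $\lbl$ untouched; tape allocation and a labelled $\Rand$ on a \emph{different} tape cannot observe $\lbl$ at all; and a labelled $\Rand$ on $\lbl$ when the tape is non-empty reads the \emph{front} element, which is oblivious to whatever has been appended to the back. Given this commutation, the inductive case is a routine calculation using monad associativity to move the presampling bind past the first $\stepdistr$, followed by the induction hypothesis applied under the outer binds.

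The main obstacle is the case where the head redex is $\Rand~\tapebound'~\lbl$ on an empty tape $\lbl$. Here $\stepdistr(\expr, \state_1)$ draws a fresh uniform sample, whereas on the right-hand side, after presampling a value $v$ into $\lbl$, the step becomes \emph{deterministic}: it simply pops $v$ off the tape. At a fixed $v$ the two single-step distributions disagree, so the commutation cannot be phrased as a pointwise identity; it holds only after averaging $v$ over $\statestepdistr(\state_1, \lbl)$, at which point ``sample uniformly with no tape update'' and ``append a uniform value then deterministically read it'' induce the same distribution on resulting configurations. Recognizing that this averaging is exactly what the outer bind against $\statestepdistr(\state_1, \lbl)$ performs, and carefully sequencing the bookkeeping so that the induction hypothesis applies to the correct continuation state, is the key content of the proof. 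This is what makes erasure a genuinely probabilistic equality rather than a syntactic commutation, and ultimately what justifies the ghost-only status of presampling in \thereflog{}.
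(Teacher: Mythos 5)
The paper states this lemma without proof (the argument lives in the mechanization), so your proposal has to be judged against the canonical erasure argument for presampling tapes --- which is exactly what you reconstruct: reduce to the stratified executions $\exec_n$, induct on $n$, pass to the limit using monotone convergence, and establish a single-step commutation between $\statestepdistr(-,\lbl)$ and $\stepdistr$ by case analysis on the head redex, with the labelled $\Rand$ on an empty tape $\lbl$ as the one genuinely probabilistic case. Your identification of that case as the crux, and of its resolution --- that ``append a uniform $v$ then deterministically pop it'' and ``sample uniformly, leaving the tape alone'' agree only after the outer bind averages over $v$ --- is correct. In fact that branch closes even more cleanly than you suggest: there one has the exact identity $\statestepdistr(\state_1,\lbl) \mbindi (\Lam \state_2 . \stepdistr(\expr,\state_2)) = \stepdistr(\expr,\state_1)$, since both routes land in the \emph{same} resulting state with $\lbl$ empty again, so the induction hypothesis is not needed in that branch at all; it is needed in the complementary branches, where the appended value survives the step at the back of the tape.

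Two points deserve more care than you give them. First, the equality cannot hold if $\exec$ is read as returning a distribution over full final configurations: already when $\expr \in \Val$ the left side is $\mret(\expr,\state_1)$ while the right side spreads its mass over the distinct states obtained from $\state_1$ by appending one extra value to tape $\lbl$. The lemma must be read with $\exec$ projecting final configurations to their result values (as in the mechanization), or up to erasure of residual tape contents; your base case (``does not alter which configurations are terminated and with what value'') implicitly assumes this reading but should say so, since it is the only reading under which the induction goes through. Second, the hypothesis $\lbl \in \dom(\state_1)$ is doing real work precisely in that base case --- it is what makes $\statestepdistr(\state_1,\lbl)$ a mass-one distribution, so that $\statestepdistr(\state_1,\lbl) \mbindi (\Lam\, \_ . \mret(\expr)) = \mret(\expr)$ --- and it must be re-established for the states reached after one step before the induction hypothesis can be applied (immediate, since tapes are never deallocated). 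With those two points made explicit, your proof is the right one.
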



\section{Case Studies}\label{sec:examples}
\newcommand{\unittyalt}{\textdom{unit}}
In this section, we develop a series of case studies of increasing complexity both in terms of program size and also in terms of proof complexity.
The examples we present are chosen not only to illustrate how \thereflog{} is applied, but also to demonstrate how working in higher-order guarded separation logic allows for concise and composable specifications.
For the sake of presentation, we make use of standard~\cite[\S6]{irisjournal} Hoare-triple notation
\[
  \hoare{\prop}{\expr}{\val . \propB} \eqdef{} \always (\prop \wand \rwpre{\expr}{\val . \propB})
\]
throughout this section.
The use of the \emph{persistence} modality $\always \prop$ says that $\prop$ holds without asserting any exclusive ownership of resources and thus that it can be arbitrarily duplicated.

\subsection{Recursion Through the Store}
Recall the motivating example from the introduction, which uses Landin's knot to define a fixed-point combinator {\sf fix} and then applies {\sf fix} to define a recursive randomized program.
\begin{align*}
  {\sf fix} &\triangleq \Lam f . \Let r = \Alloc~(\Lam x. x) in r \gets (\Lam x. f~(\deref{r})~x) ;~\deref{r}
\end{align*}
In our walk through of this example earlier in \cref{sec:rellog}, we elided a discussion of how verification of the code making up {\sf fix} itself works.
In fact, as a first step, we may show a general higher-order specification for the fixed-point combinator ${\sf fix}$.
For all abstract predicates $\pred, \predB : \Val \to \iProp$, where $\iProp$ is the type of propositions in the logic, we show the specification
\begin{align*}
  \left(\All f, \val' . \hoare{\All \val'' . \later \left(\hoare{\pred(\val'')}{f~\val''}{\predB} \right)}{F~f~\val'}{\predB} \ast \pred(\val')\right) \proves \hoare{\pred(\val)}{{\sf fix}~F~\val}{\predB}.
\end{align*}
The specification says that to prove postcondition $\predB$ of ${\sf fix}~F~\val$ given precondition $\pred(\val)$, it suffices to show a specification for $F$ with postcondition $\predB$.
In proving the specification of $F$, however, one may assume that the first argument $f$ (used for recursive calls) satisfies the specification as well, but under a later modality.
Notice that the specification does not say anything explicitly about refinement---in fact, the same specification is given to ${\sf fix}$ in logics for partial correctness (see, \eg{}, \citet{iris-tutorial}) but \emph{without} the later modality.
In our specification, the later modality signifies an obligation to take a model step: intuitively, to recurse (and hence potentially not terminate), at least one step of the model must be exhibited in order for termination to be preserved.
The proof of the specification is essentially identical to a proof carried out in standard Iris: after symbolically evaluating the store operation $\deref r$ one applies \ruleref{l\"{o}b} and the specification then follows by the assumed specification of $F$.

The fact that ${\sf walk}~n$ refines the symmetric random walk model follows by showing the specification $\hoare{\spec(n)}{{\sf walk}~n}{ \_\, . \Exists m. \spec(m)}$ which follows by applying the higher-order specification of ${\sf fix}$, picking $\pred(n) \eqdef \spec(n)$ and $\predB(\val) \eqdef{} \Exists m . \spec(m)$.
Instead of applying \ruleref{l\"{o}b} induction directly, as we did when initially describing the example, we apply the specification of $f$ as provided by ${\sf fix}$ to recurse.
A key benefit of this approach is that the use of a higher-order specification allows for a refinement proof that is agnostic to the intricacies of how iteration is realized.
As a result, the proof can be re-used for different implementations of the recursor.

\subsection{List Generators}
\citet{DBLP:journals/lmcs/KobayashiLG20} study probabilistic programs with higher-order functions (but without state) using probabilistic higher-order recursion schemes.
As a motivating example, they present the following program that combines probabilistic choice and higher-order functions.
\newcommand{\listgen}{{\sf listgen}}
\begin{align*}
  \Rec \listgen f = \spac
  &\If \Flip then \None \\
  &\Else \Let h = f~() in \\
  &\phantom{\Else} \Let t = \listgen~f in \\
  &\phantom{\Else} \Some~(h, t)
\end{align*}
The function $\listgen$ takes a generator $f$ of elements as an argument and creates a list of elements, each of them obtained by calling $f$.
The length of the list is randomized and distributed according to the geometric distribution.
As a concrete application of $\listgen$, they consider the program
\[
  \listgen~(\Lam\, \_ . \listgen~(\Lam\, \_ . \Flip))
\]
which generates a list of lists of random Booleans.

Using \thereflog{}, we show that the program refines the model below.
\begin{center}
  \tikzset{every loop/.style={in=0, out=40, looseness=2}}
  \begin{tikzpicture}[node distance=1.5cm,on grid, auto]
    \node[state, accepting] (qf) {$q_{f}$};
    \node[state] (q0) [right=of qf] {$q_{0}$};
    \node[state] (q1) [right=of q0] {$q_{1}$};

    \path[->] (q0) edge [bend right] node[above] {$\frac{1}{2}$} (qf);
    \path[->] (q0) edge [bend left] node[above] {$\frac{1}{2}$} (q1);

    \path[->] (q1) edge [loop] node[right] {$\frac{1}{2}$} (q1);
    \path[->] (q1) edge [bend left] node[below] {$\frac{1}{2}$} (q0);
  \end{tikzpicture}
\end{center}
Intuitively, state $q_{0}$ corresponds to the outer application of $\listgen$ and state $q_{1}$ to the inner application.
Using the ranking super-martingale~\cite{chakarov_probabilistic_2013} $f$ that maps $q_{f}$, $q_{0}$, and $q_{1}$ to $0$, $2$, and $3$, respectively, and $\epsilon = \frac{1}{2}$ one can straightforwardly show that the model almost-surely terminates.

First, we show a specification of the inner list generator
\[
  \hoare{\spec(q_{1})}{\listgen~(\Lam\, \_ . \Flip)}{\_ \ldotp \spec(q_{0})}.
\]
The proof proceed by \ruleref{l\"{o}b} induction.
When we reach the $\Flip$ expression in $\listgen$, we apply \ruleref{rwp-coupl-rand} using the coupling $\refRcoupl{\stepdistr(q_{1})}{\unif(\bool)}{(\Lam q, b. q = \text{if}~b~\text{then}~q_{0}~\text{else}~{q_1})}$.
If $b$ is $\True$ the goal is immediate.
If $b$ is $\False$, we symbolically evaluate the $\Flip$ expression in the generator using \ruleref{rwp-rand} as this second sampling is irrelevant to termination of the program.
The induction hypothesis now finishes the proof.

The specification of the outer list generator looks as follows.
\[
  \hoare{\spec(q_{0})}{  \listgen~(\Lam\, \_ . \listgen~(\Lam\, \_ . \Flip))}{\_ \ldotp \spec(q_{f})}.
\]
The proof proceeds by \ruleref{l\"{o}b} induction and we apply \ruleref{rwp-coupl-rand} using the coupling $\refRcoupl{\stepdistr(q_{1})}{\unif(\bool)}{(\Lam q, b. q = \text{if}~b~\text{then}~q_{f}~\text{else}~{q_1})}$.
If $b$ is $\True$ the goal is immediate.
If $b$ is $\False$, the model is in state $q_{1}$ and we apply our specification for ${\listgen~(\Lam\, \_ . \Flip)}$ which returns the model in state $q_{0}$.
The induction hypothesis finishes the proof.

While our methodology is sufficient for the example at hand, we would have liked to derive a single higher-order specification of $\listgen$ that suffices for proving both of the specifications above.
However, to do so, we believe a richer notion of model is required.
The function $\listgen$ keeps invoking $f$ until it returns $\None$, \ie{}, until the corresponding stochastic process has terminated.
But when $\listgen$ is invoked in a nested fashion, the process needs to be ``restarted'' for each nested invocation.
To give a general, model-agnostic specification it seems that one would therefore need more model structure, \eg{}, recursive Markov chains \cite{DBLP:journals/jacm/EtessamiY09}.

\subsection{Lazy Real}\label{sec:lazy-real}
\newcommand{\cmpB}{\operatorname{\mathsf{cmpB}}}
\newcommand{\getB}{\operatorname{\mathsf{getB}}}
\newcommand{\cmpList}{\operatorname{\mathsf{cmpList}}}
\newcommand{\init}{\operatorname{\mathsf{init}}}
\newcommand{\cmp}{\operatorname{\mathsf{cmp}}}

\newcommand{\absolutetextcolor}[2]{%
    \textcolor{#1}{%
        \renewcommand\color[2][]{}%
    #2}%
}
\newcommand{\ghostcode}[1]{\absolutetextcolor{lightgray}{#1}}

A standard result in probability theory says that sampling a real number uniformly from the interval $[0, 1]$ is equivalent to sampling an infinite sequence of Bernoulli random variables, each independently and uniformly drawn from the set $\{0, 1\}$~\citep[Proposition~10.3.13]{cohn2013measure}.
We can think of the sequence of Bernoulli variables as representing the digits of the real number sampled from $[0, 1]$ written in binary form.
Using this representation, we can implement a procedure to sample ``exactly'' from the uniform $[0,1]$ distribution, by sampling these binary digits lazily as they are needed.
\begin{figure}
\begin{align*}
  \init \eqdef{} & \Lam \,\_ . \ghostcode{(\hspace{-0.15em}}\Alloc \None \ghostcode{, \AllocTape 1)} \\
  \cmp \eqdef{} & \Lam \loc_{1}, \ghostcode{\lbl_{1},} \loc_{2} \ghostcode{, \lbl_{2}} . \If \loc_{1} == \loc_{2} then 0 \Else \cmpList~\loc_{1}~\ghostcode{\lbl_{1}}~\loc_{2}~\ghostcode{\lbl_{2}}
\end{align*}
\begin{minipage}[t]{.45\textwidth}
  \centering
  \begin{align*}
    &\langkw{rec}~\cmpList~\loc_{1}~\ghostcode{\lbl_{1}}~\loc_{2}~\ghostcode{\lbl_{2}} = \\
    &\quad \Let (b_{1}, n_{1}) = \getB~\ghostcode{\lbl_{1}}~\loc_{1} in \\
    &\quad \Let (b_{2}, n_{2}) = \getB~\ghostcode{\lbl_{2}}~\loc_{2} in \\
    &\quad \Let c = \cmpB~b_{1}~b_{2} in \\
    &\quad \If {c == 0} then \cmpList~~n_{1}~\ghostcode{\lbl_{1}}~n_{2}~\ghostcode{\lbl_{2}} \Else c
  \end{align*}
\end{minipage}
\begin{minipage}[t]{.4\textwidth}
  \centering
  \begin{align*}
    \getB \eqdef{} \Lam \ghostcode{\lbl,}~\loc . {}
    & \MatchML \deref \loc with
      \Some~v => v
      | \None => {\begin{array}[t]{l}
                    \Let b = \Flip\ghostcode{\lbl} in \\
                    \Let n = \Alloc \None in \\
                    \Let v = (b, n) in \\
                    \loc \gets v; \\
                    v
                  \end{array}}
      end {}
  \end{align*}
\end{minipage}
\begin{align*}
  \cmpB \eqdef{} \Lam b_{1}, b_{2} . \If b_{1} < b_{2} then {-1} \Else (\If b_{2} > b_{1} then 1 \Else 0)
\end{align*}
\caption{Code for lazy uniform real sampling and comparison. Presampling annotations are shown in \textcolor{lightgray}{gray}. }
\label{fig:code-lazy-real}
\end{figure}
In this example, we consider such a lazy implementation of a sampler, along with an operation for comparing the magnitude of two lazily-sampled reals.

An implementation is shown in \cref{fig:code-lazy-real}.
The annotations shown in \textcolor{lightgray}{gray} can be ignored for now and will be discussed later.
We store the partially-sampled bits of a real number as a mutable linked list, where the head of the list is the most significant bit.
The procedure $\init$ generates a fresh random sample with no bits sampled yet, as represented by a reference initialized to ${\None}$.
Then the comparison procedure $\cmp\ l_1\ l_2$ returns $-1$ if the real represented by $l_1$ is less than $l_2$, $0$ if they are equal, and $1$ if $l_1$ is greater than $l_2$.
It is implemented by first checking whether the pointers $l_1$ and $l_2$ are equal.
If they are, it short-circuits and immediately returns $0$, since the corresponding real numbers must be the same.
Otherwise, it calls $\cmpList$, which recurses down the lists, checking bit by bit until it finds a position in the lists where the corresponding bits are not equal.
Since the bits are only sampled lazily, during a call to $\cmpList$, the next bit to be compared in one of the lists may not have been sampled yet.
To handle this, $\cmpList$ uses the wrapper function $\getB$ when accessing a bit.
The function $\getB$ either returns the bit (if it has already been sampled), and if not, it generates a fresh bit and appends it to the end of the linked list.

\emph{A priori}, comparing two lazily-sampled reals with $\cmp$ is not guaranteed to terminate, as each generated bit of the two reals could be the same, indefinitely.
However, $\cmp$ does terminate almost-surely.
We will prove this by showing a refinement with the following Markov chain model, which samples from two independent coins until they disagree, as depicted by the following diagram.
\begin{center}
  \vspace{1em}
  \begin{tikzpicture}[node distance=2cm,on grid, auto]
    \node[state, accepting] (tt) {$\top\bot$};
    \node[state, accepting] (ff) [right=of tt] {$\bot\top$};
    \node[state] (tf) [below of=tt] {$\top\top$};
    \node[state] (ft) [below of=ff] {$\bot\bot$};

    \path[->] (tf) edge [bend left] node {$\frac{1}{4}$} (tt);
    \path[->] (tf) edge node[left, xshift=-5pt] {$\frac{1}{4}$} (ff);
    \path[->] (tf) edge [bend left] node[below] {$\frac{1}{4}$} (ft);
    \path[->] (tf) edge [in=220, out=180, looseness=3] node[left] {$\frac{1}{4}$} (tf);

    \path[->] (ft) edge node[right, xshift=5pt] {$\frac{1}{4}$} (tt);
    \path[->] (ft) edge [bend right] node[right] {$\frac{1}{4}$} (ff);
    \path[->] (ft) edge [bend left] node[below] {$\frac{1}{4}$} (tf);
    \path[->] (ft) edge [in=-40, out=0, looseness=3] node[right] {$\frac{1}{4}$} (ft);
  \end{tikzpicture}
\end{center}
This model can be shown to almost-surely terminate using the ranking super-martingale $f(b_{1}, b_{2}) \eqdef \text{if}~b_{1} \neq b_{2}~\text{then}~0~\text{else}~2$ with $\epsilon = 1$.
In order to give a specification that supports multiple comparisons of (possibly) different lazily-sampled reals, we consider $N$ iterations of the model above, \ie{}, a model with state space $\bool \times \bool \times \nat$ where the last component of the state tuple represents the remaining number of times we can call $\cmp$.

As alluded to in \cref{sec:async}, the main difficulties in showing the refinement are twofold: (1) when comparing random samples where fresh bits need to be sampled on both sides, one model step corresponds to two $\Flip$ statements occurring in two different invocations of $\getB$, and (2) when comparing random samples that have already had \emph{some} bits sampled (but not the same amount), one of the samples might need to ``catch up'' by sampling additional bits first.
Presampling tapes and asynchronous coupling are key ingredients in addressing both concerns in a high-level and composable manner.
Thus, the first step in the proof is to consider a version of the program where the sampling operations are labeled with tapes.
This version of the program is obtained by including the \textcolor{lightgray}{gray} annotations shown in \cref{fig:code-lazy-real}.

To specify the lazily-sampled real operations we will make use of two predicates defined below.
\newcommand{\Lazyreal}[2]{\mathit{LazyReal}(#1, #2)}
\newcommand{\IsList}[2]{\mathit{IsList}(#1, #2)}
\newcommand{\Cmps}[1]{\mathit{Cmps}(#1)}
\begin{align*}
  \Cmps{N} &\eqdef{} \Exists b, M . \spec(b, b, M) \land M \geq N \\
  \Lazyreal{\vec{b}}{\val} &\eqdef{} \Exists \loc, \lbl, \vec{b_{1}}, \vec{b_{2}} .
  \val = (\loc, \lbl) \sep
  \vec{b} = \vec{b_{1}} \cons \vec{b_{2}} \sep
  \IsList{\loc}{\vec{b_{1}}} \sep
  \progtape{\lbl}{1}{\vec{b_{2}}}
\end{align*}
The $\Cmps{N}$ predicate keeps track of the model resource and the fact that there are \emph{at least} $N$ comparison operations left.
The $\Lazyreal{\vec{b}}{\val}$ predicate is a representation predicate that expresses that $\val$ corresponds to the lazy-sampled real denoted by the bits $\vec{b}$.
Formally, the predicate says that $\val$ is a pair of a location $\loc$ and a tape label $\lbl$ and $\vec{b}$ can be split into two sub-sequences $\vec{b_{1}}$ and $\vec{b_{2}}$ such that $\vec{b_{1}}$ corresponds to the linked list stored at location $\loc$ and $\vec{b_{2}}$ corresponds to bits that have been presampled onto the tape $\lbl$.
The $\IsList{\loc}{l}$ assertion is a standard separation logic representation predicate for linked lists~\citep{DBLP:conf/lics/Reynolds02}.

We can now give general high-level specifications to the operations.
When initializing a new lazily-sampled real, ownership of $\Lazyreal{\nil}{\val}$ is acquired for some $\val$.
\begin{align*}
  \hoare{\TRUE}{\init~\TT}{ \val \ldotp \Lazyreal{\nil}{\val}}
\end{align*}
When comparing two lazily-sampled reals, ownership of both reals are required as well as evidence that at least one comparison is left in the model.
\begin{align*}
  \hoareV
  {\Lazyreal{\vec{b_{1}}}{\val_{1}} \sep \Lazyreal{\vec{b_{2}}}{\val_{2}} \sep \Cmps{N + 1}}
  {\cmp~\val_{1}~\val_{2}}
  {\_ \ldotp \Exists \vec{b_{1}'}, \vec{b_{2}'} . \Lazyreal{\vec{b_{1}} \mdoubleplus \vec{b_{1}'}}{\val_{1}} \sep \Lazyreal{\vec{b_{2}} \mdoubleplus \vec{b_{2}'}}{\val_{2}} \sep \Cmps{N}}
\end{align*}
In the post condition we get back ownership of both reals, where more bits may have been sampled, and the number of comparisons has been decremented.
Note that the return value of $\cmp$ is not important for establishing the refinement and can hence be ignored.

While the high-level specifications are intuitive, the proof of $\cmpList$ is more intricate and goes by induction on the (pre)sampled bit sequences $\vec{b_{1}}$ and $\vec{b_{2}}$.
For the base case of the induction, which corresponds to reaching the end of both sequences, the proof uses L{\"o}b induction and presamples coupled bits to two tapes (which in turn allows us to eliminate the later modality).
If both bit sequences are non-empty, the specification follows by symbolic execution and the induction hypothesis.
If one sequence is empty and the other is not, we first sample additional bits using \ruleref{rwp-tape-empty} and continue as when both sequences are non-empty.



\subsection{Treap}\label{sec:treap}

A \emph{treap}~\citep{SeidelA96} is a randomized binary search tree structure.
Rather than using rebalancing, it relies on randomness to ensure that the tree is $O(\log n)$ height with high probability.
Searching for a key in the tree proceeds as normal in a binary search tree, but insertion makes use of randomness.
To add a new key $k$ into the tree, the insertion procedure first searches for $k$ in the tree.
If it finds $k$ is already in the tree, insertion stops and returns.
However, if $k$ is not in the tree, insertion generates a random \emph{priority} for $k$ by sampling an element $p$ independently from some totally ordered set.
How these priorities are represented and the distribution on the set they are sampled from does not matter, so long as the probability of sampling the same priority twice is low.
Once the priority $p$ is generated, insertion creates a new node containing $(k, p)$ and attaches it to the tree as a leaf node.
At this point, the priority $p$ is compared to the priority $p'$ of the node's parent $k'$.
If $p$ is greater than $p'$, then the insertion procedure performs a tree rotation, swapping the order of $(k, p)$ and $(k', p')$.
This rotation process is repeated recursively with the new parent of $k$, until $k$ either has smaller priority than all of its ancestors, or it becomes the root.

As mentioned above, it is important for the priorities of all of the nodes to be distinct.
If they are, then with high probability the tree will have $O(\log n)$ height.
In theoretical analyses of treaps~\citep{SeidelA96, DBLP:journals/jar/EberlHN20}, it is common to treat the priorities as if they are real numbers sampled from some continuous distribution, so that the probability of a collision is $0$.

\citet{EricksonTreaps} notes that one may use lazily-sampled reals, as in the previous example, to represent the priorities.
We show that with such an implementation of treaps, the insertions terminate almost surely.
Of course, this follows from the fact that the comparison operation terminates almost surely, as the previous example showed.
The motivation for this example is to demonstrate the modularity of our approach, as the treap proof does not need to know about the ``internal'' randomness and refinement proof of the lazy real comparisons.

\newcommand{\insertp}{\operatorname{\mathsf{insert}}}
\newcommand{\height}{\operatorname{\mathit{height}}}
\newcommand{\istreap}[2]{\mathit{IsTreap}(#1, #2)}

Our specification makes use of a treap representation predicate $\istreap{\val}{t}$ that expresses that $\val$ corresponds to the treap $t$ as defined below.
\begin{align*}
  \istreap{\val}{\mathsf{leaf}} \eqdef{}& \val = \None \\
  \istreap{\val}{\mathsf{node}(k, l, r)} \eqdef{}&
                                                  \Exists \loc, p, \vec{b}, \val_{l}, \val_{r} .
                                                  v = \Some~\loc \sep \progheap{\loc}{(k, p, \val_{l}, \val_{r})} \sep \\
                                        & \Lazyreal{\vec{b}}{p} \sep
                                          \istreap{\val_{l}}{l} \sep
                                          \istreap{\val_{r}}{r}
\end{align*}
If $t$ is $\mathsf{leaf}$, then $v = \None$.
If $t$ is $\mathsf{node}(k, l, r)$ then $v = \Some~\loc$ and $\loc$ is a location that contain a tuple consisting of a key $k$, a priority $p$, and two treaps $\val_{l}$ and $\val_{r}$.
We omit the usual binary search tree invariants about the ordering of keys in $l$ and $r$, since they are not needed to ensure termination.
The $\Lazyreal{\vec{b}}{p}$ assertion is the lazy real representation predicate from the previous section which states that $p$ corresponds to the lazy-sampled real denoted by the bits $\vec{b}$.
Notice how the structure of the definition closely resembles representation predicates for non-probabilistic data structures (such as lists, trees, \etc{}) and that the lazy real is logically managed abstractly through the $\Lazyreal{\vec{b}}{p}$ assertion.

Given the representation predicate, our specification of the treap $\insertp$ procedure looks as follow.
\begin{align*}
  \hoareV
  { \istreap{\val}{t} \sep \Cmps{N} \sep \height(t) \leq N}
  { \insertp~\val~k }
  { \valB .
  \istreap{\valB}{t'} \sep
  \height(t) \leq \height(t') \leq height(t) + 1 \sep
  \Cmps{M} \sep
  N - \height(t) \leq M
  }
\end{align*}
To insert a value $k$ into the treap $t$, ownership of the treap is required and evidence that \emph{at least} $\height(t)$ comparisons are left in the lazy-real model.
In return, we get ownership of a (possibly-) updated treap where the height may have been increased by one and $\height(t)$ comparisons may have been performed.
The proof proceeds in a straightforward way by applying the specification of $\cmp$ to compare the lazily-sampled priorities.

\subsection{Galton-Watson Tree}\label{sec:gwt}
\newcommand{\childdistr}{\mu}
In this example, we consider a sampler for generating Galton-Watson trees.
Galton-Watson trees are random trees generated by the following stochastic process, which proceeds through a series of rounds.
Initially, in round 1, the tree starts with a single root node, which we call generation $1$ of the tree.
In round 2, we sample a natural number $n$ from some distribution $\childdistr$, and attach $n$ children nodes to the root node.
These children are called generation $2$.
Inductively, in round $k+1$, for each node $i$ in generation $k$, we draw an independent sample $n_i$ from $\childdistr$ and attach $n_i$ children nodes to $i$.
The nodes added in round $k+1$ constitute generation $k+1$.
The process stops and is said to undergo \emph{extinction} if a generation has no nodes, \ie if all the $n_i$ in a round are $0$.

There are many algorithms for sampling Galton-Watson trees.
One approach is to essentially follow the definition above for how the trees are generated.
This can be seen as a kind of ``breadth-first'' sampling strategy, as we sample all the nodes in a given generation before moving on to any node in the next generation.
In fact, one can consider alternate strategies for ``traversing'' the tree as it is generated.
For example, \citet{DBLP:journals/siamcomp/Devroye12} describes a depth-first approach which maintains a stack containing the nodes whose children have not yet been sampled.\footnote{\citeauthor{DBLP:journals/siamcomp/Devroye12} in fact describes a variant where we want to sample a tree conditioned on the fact that it will have exactly $k$ nodes for some $k$, so as a result the algorithm aborts and restarts if $k + 1$ nodes have been generated in a tree.}

\newcommand{\run}{\mathsf{run}}
\newcommand{\qadd}{\mathsf{Stack.add}}
\newcommand{\qtake}{\mathsf{Stack.take}}
\newcommand{\qcreate}{\mathsf{Stack.create}}

\newcommand{\sampleNode}{\mathsf{sampleNode}}
\newcommand{\genTree}{\mathsf{genTree}}
\newcommand{\linit}{\mathsf{List.init}}
\newcommand{\isStack}{\operatorname{\mathit{IsStack}}}
\newcommand{\Stack}{\operatorname{\mathit{Stack}}}
\newcommand{\length}{\operatorname{\mathit{length}}}

\begin{figure}
\begin{align*}
  & \langkw{rec}~\sampleNode~d~r~s~() = \\
  & \quad \Let n = d~() in \\
  & \quad \Let f = (\Lam \,\_ . \Let r' = \Alloc~[] in (\qadd~(\sampleNode~d~r'~s)~s);~r') in \\
  & \quad r \gets \linit~n~f \\[-2.5em]
\end{align*}
\begin{minipage}[t]{0.45\textwidth}
  \begin{align*}
    \langkw{rec}~\run~s =
    \MatchML {\qtake~s} with
    \Some~f => f~(); \run~s
    | \None => ()
    end {}
  \end{align*}
\end{minipage}
\hfill
\begin{minipage}[t]{0.45\textwidth}
  \begin{align*}
    \genTree \eqdef{} \Lam d .
    & \Let r = \Alloc~[] in \\
    & \Let s = \qcreate~() in \\
    & \qadd~(\sampleNode~d~r~s)~s; \\
    & \run~s; \deref r
  \end{align*}
\end{minipage}
\caption{Implementation of a sampler for Galton-Watson trees in a higher-order event-loop style.}
\label{fig:gw-tree-code}
\end{figure}

Here, we consider an implementation of a Galton-Watson tree sampler that uses a stack to manage traversal of the tree.
However, rather than storing \emph{nodes} in the stack, we will use a higher-order implementation that stores pending \emph{tasks}, functions of type $\unittyalt \rightarrow \unittyalt$, that when invoked will carry out sampling a given node's children.
In addition, our implementation will be parameterized by a function $d$ that carries out sampling from the distribution $\childdistr$ for the number of children.

The code is shown in \cref{fig:gw-tree-code}. A node is represented by a list of pointers to its children, with an empty list representing a leaf node.
The top-level function $\genTree$ takes the child distribution sampling function $d$ as an argument.
It initializes a reference cell $r$ that contains the root node, represented as an empty list (because $r$ has no children yet) and creates an empty task stack $s$.
An initial task $\sampleNode~d~r~s$ for sampling the children of $r$ is added to the task stack.
Then, the task stack $\run$ function is called, which invokes all the tasks in the stack until there are none remaining.
When the call to $\run$ returns, $\genTree$ returns the representation of the root stored in $r$.

The actual work of carrying out sampling is done by the $\sampleNode$ task function, which takes as arguments the sampler function $d$, the node $r$ to sample for, and the task stack $s$.
This function begins by sampling the number of children $n$ from $d$.
It then uses the $\linit$ function to initialize a list of length $n$, where each element of the list is a reference to a child node generated by a call to the locally defined function $f$.
This function $f$ adds a recursive task for the child node it generates to the task stack.
The list of references is then stored back in $r$.

A key challenge here is that the task stack is re-entrant, in the sense that a task may add more tasks to the same stack it came from.
Thus, in reasoning about the recursion in $\run$, one cannot proceed by induction on the length of the stack, as the stack may grow before the recursive call.

When does such a sampler terminate?
The sampler terminates only if the tree goes extinct.
The probability of extinction is a classical problem in probability theory and depends on the distribution $\childdistr$ for the number of children.
A typical proof approach is to represent the tree generation process as a random walk Markov chain, where the position of the random walk is the number of nodes that have not yet had their children sampled.
Each transition of the walk corresponds to a node's children being sampled: if the node is in position $n+1$, it transitions to $n + k$ with probability $\mu(k)$.
(So in particular, moving from $n + 1$ to $n$ if no children are produced.)
The probability of extinction is equivalent to the probability that the walk hits $0$.
Graphically, we can represent this as follows:
\begin{center}
  \begin{tikzpicture}[node distance=1.5cm, on grid, auto,  state/.style={circle, draw, minimum size=1cm}]
    \node[state, accepting] (0) {$0$};
    \node (dots1) [right=of 0] {$\cdots$};
    \node[state] (n) [right=of dots1] {$n$};
    \node[state] (n1) [right=of n] {$n + 1$};
    \node (dots2) [right=of n1] {$\cdots$};
    \node[state] (nm) [right=of dots2] {$n + k$};

    \path[->] (n1) edge [bend left] node {$\mu(k)$} (nm);
    \path[->, dashed] (n1) edge [bend right] (dots2);
    \path[->] (n1) edge [bend left] node [below] {$\mu(0)$} (n);
  \end{tikzpicture}
\end{center}
We prove a specification that establishes a refinement between $\genTree$ and this Markov chain.

We first give general reusable specifications to the stack operations that do not concern themselves with refinement.
Since the tasks in the stack-based sampler (\ie{}, partial applications of $\sampleNode$) add elements to the stack, we need a (self-referential) specification of the stack that allows the specification of the elements in the stack to depend on the stack itself.

By working in a logic with guarded recursion, we can define \emph{guarded-recursive predicates} as a \emph{guarded fixed point} $\MU x .
t$.
Guarded fixed points have no restriction on the variance of the recursive occurrence of $x$, as long as $t$ is \emph{contractive}, \ie{}, as long as all the occurrences of $x$ in $t$ appear below a later modality.
We will use such a predicate to define a self-referential representation predicate for the stack data structure to support the higher-order nature of the Galton-Watson sampler.

We define the predicate in two steps.
First, we define an assertion $\isStack(s, n, \pred)$ that expresses that $s$ is a stack (implemented as a list) with $n$ elements, each satisfying $\pred$.
\begin{align*}
  \isStack(\pred, s, n) &\eqdef{} \Exists l . \length(l) = n \sep \IsList{s}{l} \sep \Sep_{x \in l} \pred(x)
\end{align*}
Second, we define $\Stack(\predB)(s, n)$ as a guarded fixed point.
The assertion expresses that $s$ is a stack with $n$ elements, each satisfying $\predB$, but $\predB$ may also depend on the stack assertion.
\begin{align*}
  \Stack(\predB)(s, n) &\eqdef{} \MU Q  . \isStack((\Lam \valB . \predB(Q, \valB, s)), s, n)
\end{align*}
For the fixed point to exist, the predicate $\predB : (\Val \to \nat \to \iProp) \to \Val \to \Val \to \iProp$ must be contractive in the first parameter.

With the $\Stack(\predB)$ representation predicate in hand, we prove the following generic specifications for the stack operations.
\begin{align*}
  &\hoare{\TRUE}{\qcreate}{s \ldotp \Stack(\predB)(s, 0)} \\
  &\hoare{\Stack(\predB)(s, n) \sep \predB(\Stack(\predB), \val, s)}{\qadd~\val~s}{ \_ \ldotp \Stack(\predB)(s, n + 1) } \\
  &\hoare{\Stack(\predB)(s, 0)}{\qtake~q}{\val \ldotp \val = \None \sep \Stack(\predB)(s, 0) } \\
  &\hoare{\Stack(\predB)(s, n + 1)}{\qtake~s}{\val \ldotp \Exists \valB . \val = \Some~\valB \sep \Stack(\predB)(s, n) \sep \predB(\Stack(\predB), \valB, s) }
\end{align*}
When creating a stack, ownership of an empty stack $\Stack(\predB)(s, 0)$ is obtained for a user-chosen contractive stack predicate $\predB$.
To add an element $\val$ to the stack $s$, ownership of the stack and $\predB(\Stack(\predB), \val, s)$ is required.
Finally, when popping an element $\val$ from a non-empty stack, one gets back ownership of $\predB(\Stack(\predB), \val, s)$.

\newcommand{\predgwt}{\predB_{\mathsf{GW}}}
\newcommand{\lblgwt}{\lbl_{\mathsf{GW}}}

To specify the Galton-Watson sampler, we apply our general specification of the stack operations.
Recall that $\genTree$ stores suspended tasks, \ie{} \emph{closures}, in the task stack.
Intuitively, this means that the stack predicate $\predgwt$ we instantiate the stack library with must be a \emph{specification} of the shape $\later \left(\hoare{\Stack(\predB)(s, n) \sep \ldots}{f~\TT}{\_ \ldotp \Stack(\predB)(s, n + k) \sep \ldots}\right)$ for task $f$.
The Hoare triple must be behind a later modality for $\predgwt$ to be contractive since the pre- and postcondition contains the stack representation predicate.
This may seem like an obstacle to specifying the $\run$ event-loop: when suspended tasks $f$ are taken out of the stack, the specification will still appear below a later modality which the proof has no way of eliminating.
That is, we will need to eliminate the later modality \emph{now} to apply the specification of $f$, but a coupling step---which would allow us to eliminate the later modality---only happens \emph{in the future} when $f$ is invoked.
To address this issue, the key idea is to \emph{asynchronously} couple the sampling happening in $d$ with the model during the proof of the $\run$ specification and thus eliminate the later modality earlier than otherwise permitted.
As our language and thus presampling tapes only support uniform sampling, we assume $\mu$ distributes as $\unif(N)$ for some $N$ for the remainder of the section.

Our specification of $\genTree$ looks as follows.
\begin{align*}
  &\hoare{\progtape{\lblgwt}{\tapebound}{k}}{d~\TT}{ \val \ldotp \val = k \sep \progtape{\lblgwt}{\tapebound}{\nil}} \proves \\*
  &\hoare{\spec(1) \sep \progtape{\lblgwt}{\tapebound}{\nil}}{\genTree~d}{\TRUE}
\end{align*}
The specification requires that the sampling function $d$ consumes randomness from the $\lblgwt$ tape when invoked; the $\run$ specification will populate the presampling tape through asynchronous coupling.
When creating the stack, we pick the stack predicate $\predgwt$ below.
\begin{align*}
    \predgwt(S, s, f) \eqdef{}
  \later
  \left(
  \All n, k .
  \hoare
  {S(s, n) \sep \progtape{\lblgwt}{\tapebound}{k}}
  {f~\TT}
  {\_ \ldotp S(s, n + k) \sep \progtape{\lblgwt}{\tapebound}{\nil}}
  \right)
\end{align*}
The predicate says that tasks in the stack must satisfy a specification that, given ownership of the stack and $\progtape{\lblgwt}{\tapebound}{k}$, adds $k$ new tasks to the stack.

The crux of the proof lies in the specification of $\run$.
\begin{align*}
  \hoareV
  {\Stack(\predgwt)(s, n) \sep \spec(n) \sep \progtape{\lblgwt}{\tapebound}{\nil}}
  {\run~s}
  {\_ \ldotp \Exists m . \Stack(\predgwt)(s, m) \sep \spec(m) \sep \progtape{\lblgwt}{\tapebound}{\nil} }
\end{align*}
The specification of $\run$ requires ownership of a stack $s$ with $n$ elements, the model resource at state $n$, and the $\lblgwt$ presampling tape.
The proof also proceeds by L{\"o}b induction.
To eliminate \emph{both} the later modality in front of the induction hypothesis \emph{and} the later modality in front of the specification of the task retrieved from the queue, the proof asynchronously couples the model transition from $n$ with a presampling onto $\lblgwt$.


\section{Related Work}

\paragraph{AST of first-order programs}

The study of termination of stochastic processes has a long history in probability theory, and there are broad classes of techniques for proving termination in the theory of Markov chains and branching processes. Many of these techniques have been adapted to formal methods for proving termination of probabilistic programs.

For example, \citet{chakarov_probabilistic_2013} use ranking super-martingales (RSM) for proving almost-sure termination.
They implement an analysis that can automatically discover the existence of RSMs.
A number of follow-up works have extended the scope and applicability of ranking super-martingales~\citep{DBLP:conf/popl/FioritiH15, fu_termination_2019, McIverMKK18, DBLP:journals/pacmpl/AgrawalC018}. 
Several of these works develop program logics for first-order imperative probabilistic programs, in which the primitive mechanism for looping is a {\tt while} loop construct.
Ranking super-martingales are used in the conditions for the {\tt while} loop, analogous to the way ranking functions are used in standard, non-probabilistic Hoare logic's variant rule for loops.
In principle, similar rules could be devised for particular schemes and patterns of recursion in a higher-order language like \thelang{}, but it would be challenging to devise general purpose rules, that would apply to examples like the treap or the Galton-Watson sampler from \cref{sec:examples}.

\citet{DBLP:conf/fossacs/AronsPZ03} give an approach for reducing almost-sure termination of a randomized program to may-termination under a non-deterministic semantics.
For this non-deterministic semantics, their \emph{planner} rule allows one to pre-select a finite sequence of outcomes for random choices.
If the program can be shown to terminate when this finite sequence of outcomes occurs, then it must terminate almost-surely under the standard probabilistic semantics.
\citet{DBLP:conf/cav/EsparzaGK12} generalize this rule to more flexible forms of finite sequences, which they call \emph{patterns}.
This generalization is complete for \emph{weakly finite} programs, which from each starting state can only reach a finite number of states.
An advantage of this approach is that it allows one to then apply tools and algorithms for automatically analyzing non-probabilistic program termination.
However, examples like the unbounded random walk or the Galton-Watson tree are not weakly finite in this sense, and appear to be beyond the scope of these techniques.
The soundness of these approaches is justified by the Borel-Cantelli lemma, which shows that the pre-selected sequence of outcomes must occur infinitely often in any non-terminating execution.
The Borel-Cantelli lemma is an example of a zero-one law, and similar zero-one laws have been used to justify other proof rules for almost sure termination~\citep{DBLP:journals/toplas/HartSP83}.

\emph{Positive} almost-sure termination (PAST) is a stronger property \cite{DBLP:conf/rta/BournezG05, DBLP:journals/pacmpl/MajumdarS24}, stating that the expected number of total transitions for a program is finite.
A number of methods have been developed for proving bounds on the expected running time of a program, thereby implying almost-sure termination. 
For example, \citet{kaminski_weakest_2016} develop a weakest-precondition style calculus for bounding expected running time.
\citet{DBLP:conf/pldi/NgoC018} develop an extension of automatic amortized resource analysis (AARA) for bounding expectations of resource use in randomized programs.
Some of the examples considered in \cref{sec:examples}, such as the random walk, are AST but have infinite expected running time, so techniques based on bounding expected running time cannot be used to prove that they are AST.

Although our focus has been on using \thereflog{} to prove AST, the lower bound on termination probabilities established by \cref{thm:soundness} applies even when the model does not terminate with probability 1.
Thus, \thereflog{} can be used to show lower-bounds on termination probabilities for programs that are not AST.
\citet{DBLP:journals/pacmpl/FengCSKKZ23} develop a weakest pre-expectation calculus for proving lower bounds on expected values of quantities in non-AST programs.

\paragraph{AST of higher-order programs}

\citet{DBLP:conf/tphol/Hurd02} develops theory for proving termination of a monadic embedding of randomized programs in HOL.
He defines a {\tt while} combinator for this embedding and proves analogues of 0-1 rules for almost-sure termination by \citet{DBLP:journals/toplas/HartSP83}.

\citet{DBLP:journals/pacmpl/AvanziniBL21} present a technique to reason about the expected runtime of programs written in a probabilistic lambda calculus.
They use a continuation-passing style translation, where the continuation maps inputs to expected runtimes, thus turning the program into a cost transformer.

Several works introduce type systems that imply termination or expected bounds on higher-order programs. \citet{DBLP:conf/esop/LagoG17} present a probabilistic variant of sized types that ensures almost-sure termination of well-typed programs.
In this system, a function body's type is associated with a particular kind of random walk called a sized random walk.
The typing rule for {\tt letrec} has a premise that requires the corresponding random walk to be almost-surely terminating.
AST for these sized random walks is shown to be decidable.
\citet{DBLP:journals/pacmpl/LagoFR21} develop an intersection type systems for capturing both almost-sure and positive almost-sure termination.
\citet{DBLP:journals/pacmpl/WangKH20} present a probabilistic variant of RaML, a higher-order language with a type system that does amortized resource analysis to produce bounds on expected resource consumption automatically.

\citet{DBLP:journals/lmcs/KobayashiLG20} introduce  probabilistic higher-order recursion schemes (PHORS), a probabilistic variant of the HORS considered in higher-order model checking.
They show that the decision problem for almost-sure termination of order-2 PHORS is undecidable, and introduce a sound but incomplete procedure for bounding termination probabilities of order-2 PHORS.
An implementation of their procedure is applied to several PHORS, including one that is equivalent to the listgen example in \cref{sec:examples}.
In place of Markov chains, PHORS could be used as the models in \thereflog{}, in order to prove that a particular PHORS is an adequate abstraction of a program.

In contrast to \thereflog{}, the above works deal with higher-order calculi \emph{without} imperative state. 

\paragraph{Guarded recursion for termination and termination-preserving refinement}
We have already discussed closely related uses of guarded recursion for termination-preserving refinement in \cref{sec:sound-compare}.
Other works have made use of guarded recursion or step-indexed models to prove termination or termination-preserving refinement.
\citet{DBLP:journals/pacmpl/SpiesKD21} construct a transfinite step-indexed logical relation to show that a linear type system implies termination.
SeLoC~\citep{DBLP:conf/sp/FruminKB21} is a relational logic for proving termination-sensitive noninterference properties of higher-order programs.
To achieve the intended security property, the simulation relation that their program logic encodes is much stricter than the one we have considered here, so that later modalities are only eliminated in rules where \emph{both} the program and its model take a simultaneous step.
In contrast, \citet{tiniris} develop a model of termination-\emph{in}sensitive noninterference where later modalities are eliminated when \emph{either} the program or the model take a step.

Much as PAST implies AST, establishing an upper bound on the number of steps a program takes also implies termination.
Several works have proved resource bounds in step-indexed program logics~\citep{tcred-iris, thunks} using the technique of time credits~\citep{atkey, unionfind}.
In this approach, a later modality is eliminated on every program step, which might at first appear to allow infinite looping by L\"{o}b induction.
However, this is ruled out by requiring a finite resource called a time credit to be spent on steps of execution.
Such an approach could be adapted to deriving expected bounds on probabilistic programs, but this would not be applicable for proving AST of the symmetric random walk, as it is not PAST.

\paragraph{Guarded recursion for probabilistic refinement}
Guarded recursion has also been used to define logical relations models and logics for refinement of probabilistic programs.
\citet{bizjak-birkedal} construct such a logical relation for a language similar to our \thelang{}, but without higher-order state.
\citet{aguirre-birkedal} extend this language with support for countable non-deterministic choice.
In the presence of non-determinism, a program has a range of probabilities of termination based on how non-determinism is resolved by a scheduler.
They develop a logical relation for proving equivalences with respect to the may and must-termination probabilities (the maximal and minimal probabilities across all schedulers).
\citet{DBLP:journals/pacmpl/WandCGC18} build a step-indexed logical relation for proving contextual equivalences for a higher-order language with sampling from continuous distributions.
Finally, Clutch~\citep{clutch} is a program logic based on Iris for proving contextual refinements (that do not preserve termination of programs) written in a language like \thelang{}.
The general structure of our coupling precondition is inspired by the coupling modality of Clutch.
However, our definitions, method for preserving termination, and the soundness proof is new and entirely different.
We re-use the idea of pre-sampling tapes from Clutch, which is mostly orthogonal to the question of termination preservation. 

In the above works, the defined refinement relations lead to lower bounds that are effectively in the \emph{opposite} direction of \cref{thm:soundness}.
In other words, translating their approaches to our setting leads to a soundness theorem in which the termination probability of the \emph{program} is a lower bound on the termination probability of the \emph{model}.
This is because their approaches effectively allow for later elimination when the program takes a step, as opposed to when the model takes a step as in \thereflog{}.\footnote{The first three of these works use explicitly step-indexed models, as opposed to the ``logical'' approach to step indexing~\citep{DBLP:journals/corr/abs-1103-0510} with $\later$ modalities. Nevertheless, they decrement the step index when the program on the left side of their refinement relation takes a step, which corresponds to eliminating a later modality as we have described.}
Since the goal in those works is to prove \emph{equivalences} by proving refinements in both directions, the direction of this inequality is adequate for their purposes.

\citet{DBLP:conf/esop/0001BBBG018} present a logic to prove couplings between
pairs of infinite runs over Markov chains defined in a probabilistic guarded
lambda calculus, without considering any form of state. Markov chains are
defined as distributions over infinite streams. Productivity (the fact that a
step will eventually be taken) is dual
to termination, and is ensured by their type system.

Polaris~\citep{polaris} is a concurrent program logic based on Iris for proving a coupling between a randomized program and a more abstract model.
The soundness theorem for Polaris allows bounds on probabilities and expectations in the model to be translated into bounds on the program across schedulers.
However, these bounds only apply under schedulers for which the program terminates in a bounded number of steps.
Thus, Polaris is a kind of partial correctness logic, as its soundness theorem \emph{assumes} a property that is already stronger than AST.


\section{Conclusion}

We have presented \thereflog{}, a logic for proving termination-preserving refinements between higher-order probabilistic programs and more abstract models.
For proving such refinements, \thereflog{} combines powerful techniques such as L\"{o}b induction and couplings.
We have demonstrated \thereflog{} on several examples, including ones that are outside the scope of prior methods and approaches for proving almost-sure termination of randomized programs.
A natural future direction would be to extend \thereflog{} with support for adversarial non-determinism in programs and models, and to consider classes of abstract models with more structure, such as PHORS and recursive Markov chains, with the aim of developing more composable specifications.


\begin{acks}
  This work was supported in part by
  the \grantsponsor{NSF}{National Science Foundation}{}, grant no.~\grantnum{NSF}{2225441},
  the \grantsponsor{Carlsberg Foundation}{Carlsberg Foundation}{}, grant no.~\grantnum{Carlsberg Foundation}{CF23-0791},
  a \grantsponsor{Villum}{Villum}{} Investigator grant, no. \grantnum{Villum}{25804}, Center for Basic Research in Program Verification (CPV), from the VILLUM Foundation,
  and the European Union (\grantsponsor{ERC}{ERC}{}, CHORDS, \grantnum{ERC}{101096090}).
  Views and opinions expressed are however those of the author(s) only and do not necessarily reflect those of the European Union or the European Research Council.
  Neither the European Union nor the granting authority can be held responsible for them.
\end{acks}

\bibliography{refs}

\pagebreak
\appendix

\section{Refinement Logic with Invariant Mask Annotations}\label{app:logic}
\small
\begin{lemma}[Program logic rules]
  \begin{align*}
    \expr_{1} \purestep \expr_{2} \ast \rwpre{\expr_{2}}[\mask]{\pred} &\proves \rwpre{\expr_{1}}[\mask]{\pred} \ruletag{rwp-pure}\\
    \All \loc . \progheap{\loc}{\val} \wand \pred(\loc) &\proves \rwpre{\Alloc \val}[\mask]{\pred} \ruletag{rwp-alloc} \\
    (\progheap{\loc}{\val} \wand \pred(\val))\sep {\progheap{\loc}{\val}} &\proves \rwpre{\deref\loc}[\mask]{\pred} \ruletag{rwp-load} \\
    (\progheap{\loc}{\valB} \wand \pred(\TT))\sep {\progheap{\loc}{\val}} &\proves \rwpre{\loc \gets \valB}[\mask]{\pred} \ruletag{rwp-store} \\
    \All n \leq \tapebound . \pred(n) &\proves \rwpre{\Rand \tapebound}[\mask]{\pred} \ruletag{rwp-rand} \\
    \pred(\val) &\proves \rwpre{\val}[\mask]{\pred} \ruletag{rwp-val} \\
    \rwpre{\expr}[\mask]{\val . \rwpre{\fillctx\lctx[\val]}[\mask]{\pred}} &\proves \rwpre{\fillctx\lctx[\expr]}[\mask]{\pred} \ruletag{rwp-bind} \\
    (\All \val . \Psi(\val) \wand \pred(\val)) \sep \rwpre{\expr}[\mask]{\Psi} &\proves \rwpre{\expr}[\mask]{\pred} \ruletag{rwp-mono} \\
    \prop \sep \rwpre{\expr}[\mask]{\pred} &\proves \rwpre{\expr}[\mask]{v . \, \prop \sep \pred(v)} \ruletag{rwp-frame} \\
    \All \lbl . \progtape{\lbl}{\tapebound}{\nil} \wand \pred(\lbl) & \proves \rwpre{\AllocTape \tapebound}[\mask]{\pred}\ruletag{rwp-tape-alloc} \\
    (\progtape{\lbl}{\tapebound}{\tape} \wand \pred(n)) \sep \progtape{\lbl}{\tapebound}{n \cons \tape} &\proves \rwpre{\Rand \tapebound~\lbl}[\mask]{\pred} \ruletag{rwp-tape-empty} \\
    (\All n \leq \tapebound . \progtape{\lbl}{\tapebound}{\nil} \wand \pred(n)) \sep \progtape{\lbl}{\tapebound}{\nil} &\proves \rwpre{\Rand \tapebound~\lbl}[\mask]{\pred} \ruletag{rwp-tape}
  \end{align*}
\end{lemma}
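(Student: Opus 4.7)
The plan is to validate each rule by unfolding the definition of $\rwpre{\expr}[\mask]{\pred}$ as a least fixed point and, where necessary, reducing to the corresponding constructor of the coupling precondition $\execCoupl{\mstate}{\cfg}{\predB}$ from \cref{fig:cpre}. Since all rules quantify over an arbitrary postcondition and an arbitrary model state, I will systematically case-split on whether $\expr$ is a value; for the non-value branch, I will introduce the assumed model and state interpretation resources $\specinterp(\mstate_1) \sep \stateinterp(\state_1)$ and then supply a witness for $\execCoupl{\mstate_1}{(\expr, \state_1)}{\predB}$.

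First I would dispatch the ``simple'' rules (\ruleref{rwp-pure}, \ruleref{rwp-alloc}, \ruleref{rwp-load}, \ruleref{rwp-store}, \ruleref{rwp-rand}, \ruleref{rwp-tape-alloc}, \ruleref{rwp-tape-empty}, \ruleref{rwp-tape}). In each case the expression is non-value and reducible, the program step yields a Dirac distribution over a uniquely determined successor configuration (or a uniform distribution for $\Rand$), and one picks the trivial coupling witnessing $\refRcoupl{\mret(\mstate)}{\stepdistr(\cfg_1)}{R}$ demanded by \ruleref{cpl-prog}. The side conditions about the state interpretation are discharged by the standard Iris ghost-update lemmas for the authoritative heap (and, for the tape rules, the authoritative tape map), using the agreement properties $\stateinterp(\state) \sep \progheap{\loc}{\val} \proves \state(\loc) = \val$ and the analogous one for $\progtape{\lbl}{\tapebound}{\tape}$. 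The rule \ruleref{rwp-val} is even easier: it is exactly the left disjunct of the fixed-point definition, so one unfolds and picks the disjunct.

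The two rules that require real work are \ruleref{rwp-mono} and \ruleref{rwp-bind}. For \ruleref{rwp-mono}, I would do a L\"{o}b-style / least-fixed-point induction on $\rwpre{\expr}[\mask]{\Psi}$, and within the non-value branch a nested induction on the $\execCoupl{\mstate}{\cfg}{\predB}$ fixed point, propagating the implication $\All \val.\,\Psi(\val) \wand \pred(\val)$ pointwise through each of \ruleref{cpl-prog}, \ruleref{cpl-model-prog}, and \ruleref{cpl-model}. Then \ruleref{rwp-frame} follows from \ruleref{rwp-mono} together with the observation that $\prop$ may be stashed as an assumption and reintroduced at every constructor (formally: a separate induction threading $\prop$ through the coupling precondition, using \ruleref{later-sep} to push $\prop$ under the later modalities appearing in \ruleref{cpl-model-prog} and \ruleref{cpl-model}). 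For \ruleref{rwp-bind}, the plan is induction on $\rwpre{\expr}[\mask]{\val.\,\rwpre{\fillctx\lctx[\val]}[\mask]{\pred}}$: at the value leaf we directly obtain $\rwpre{\fillctx\lctx[\val]}[\mask]{\pred}$; at the non-value case we transport the step via the compositionality of evaluation contexts, namely that $\stepdistr(\fillctx\lctx[\expr_1],\state_1)$ is the pushforward of $\stepdistr(\expr_1,\state_1)$ along $\lctx$ (and preserves reducibility), so any coupling witness for $(\expr_1,\state_1)$ lifts to one for $(\fillctx\lctx[\expr_1],\state_1)$.

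The main obstacle will be \ruleref{rwp-bind}, specifically making the nested structural induction on the coupling precondition interact cleanly with the later modalities in \ruleref{cpl-model-prog} and \ruleref{cpl-model}: one has to commute $\later$ past the recursive occurrences of $\execCoupl{\mstate}{\cfg}{\predB}$ while simultaneously rewrapping the continuation in an evaluation context, which forces a careful statement of the induction invariant (parameterized by $\lctx$). A minor secondary obstacle is bookkeeping the invariant mask $\mask$ and the fancy update modality that, following Iris convention, guards every transition in the definition of $\rwpre{-}[\mask]{-}$; these are handled by the standard $\pvs$ monotonicity and frame lemmas and do not affect the structure of the proof.
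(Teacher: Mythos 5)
Your proposal is correct and follows essentially the same route the paper intends: the primitive rules are validated by unfolding the least-fixed-point definition and instantiating \ruleref{cpl-prog} with the trivial coupling $\refRcoupl{\mret(\mstate)}{\stepdistr(\cfg_1)}{R}$ (plus the standard authoritative ghost updates for the heap and tape interpretations), \ruleref{rwp-val} is the left disjunct, and the structural rules follow by induction on the two fixed points with the evaluation-context pushforward for \ruleref{rwp-bind}. The only minor remark is that \ruleref{rwp-frame} follows directly from \ruleref{rwp-mono} by instantiating $\Psi := \pred$ and discharging the wand with $\prop$, so the separate induction threading $\prop$ through the coupling precondition is unnecessary.
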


\begin{lemma}[Model rules]
  \normalfont
  \begin{mathpar}
    \inferH{rwp-coupl-rand}
    { \refRcoupl{\stepdistr(\mstate_1)}{\unif(\tapebound)}{R} \\
      \vdash \All (\mstate_2, n) \in R . (\spec(\mstate_2) \ast P) \wand \rwpre{n}[\mask]{\pred}
    }
    {\spec(\mstate_1) \ast \later P \vdash \rwpre{\Rand \tapebound}[\mask]{ \pred }}
    \and
    \inferH{rwp-spec-det}
    {
      \stepdistr(\mstate_1)(\mstate_2) = 1 \\
      \spec(\mstate_2) \ast \prop \vdash \rwpre{\expr}[\mask]{\pred}
    }
    {\spec(\mstate_1) \ast \later \prop \vdash \rwpre{\expr}[\mask]{\pred}}
  \end{mathpar}
\end{lemma}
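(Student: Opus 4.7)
The plan is to derive both rules from the semantic model of $\rwpre{\expr}{\pred}$ by unfolding the fixed point and applying the appropriate constructor of the coupling precondition from \cref{fig:cpre}. Throughout, I assume the fancy update modality elided from the presentation is available for performing authoritative ghost-state updates on the model component.

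For \ruleref{rwp-spec-det}, first unfold $\rwpre{\expr}{\pred}$ to its right (non-value) disjunct, introduce the interpretations $\specinterp(\mstate') \sep \stateinterp(\state)$, and identify $\mstate' = \mstate_1$ via the agreement $\specinterp(\mstate') \sep \spec(\mstate_1) \proves \mstate' = \mstate_1$. Apply \ruleref{cpl-model} with the coupling $\refRcoupl{\stepdistr(\mstate_1)}{\mret((\expr,\state))}{R}$ for $R = \set{(\mstate_2, (\expr,\state))}$; this is valid because $\stepdistr(\mstate_1)(\mstate_2) = 1$ forces $\stepdistr(\mstate_1) = \mret(\mstate_2)$ and witnesses $\red(\mstate_1)$. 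Under the $\later$ in the resulting goal, use \ruleref{later-mono} to strip the $\later$ from $\prop$, perform the authoritative ghost update $\spec(\mstate_1) \sep \specinterp(\mstate_1) \rightsquigarrow \spec(\mstate_2) \sep \specinterp(\mstate_2)$, invoke the premise $\spec(\mstate_2) \sep \prop \proves \rwpre{\expr}{\pred}$, unfold its right disjunct once more, and feed it the updated interpretations to obtain the required $\execCoupl{\mstate_2}{(\expr,\state)}{\ldots}$.

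For \ruleref{rwp-coupl-rand} the template is the same but with \ruleref{cpl-model-prog} in place of \ruleref{cpl-model}. Since $\Rand \tapebound$ is never a value, go directly to the non-value disjunct. The operational semantics of $\Rand$ makes $\stepdistr(\Rand \tapebound,\state)$ equal to $\unif(\tapebound)$ pushed forward along $n \mapsto (n,\state)$, so the hypothesised $\refRcoupl{\stepdistr(\mstate_1)}{\unif(\tapebound)}{R}$ lifts (using \cref{lem:coupling-comp}) to $\refRcoupl{\stepdistr(\mstate_1)}{\stepdistr(\Rand \tapebound,\state)}{R'}$ over $R' = \set{(\mstate_2, (n,\state)) \mid (\mstate_2, n) \in R}$, and both sides are reducible. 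After applying \ruleref{cpl-model-prog}, strip $\later \prop$, perform the ghost update on $\spec$, and instantiate the premise $(\spec(\mstate_2) \sep \prop) \wand \rwpre{n}{\pred}$ for each $(\mstate_2,n) \in R$; since $n$ is a value, a single further unfolding of $\rwpre{n}{\pred}$ picks the left disjunct $\pred(n)$ and, together with the updated interpretations, discharges the postcondition of the coupling precondition.

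The main obstacle is the bookkeeping around the two interleaved least fixed points ($\rwpre{\cdot}{\cdot}$ and $\execCoupl{\cdot}{\cdot}{\cdot}$) and the threading of the authoritative ghost state through the model update: one must verify that the $\later$ contributed by the coupling-precondition constructor aligns exactly with the $\later \prop$ in the hypothesis, and that the ghost update on $\spec$/$\specinterp$ can be commuted with a $\later$ (relying on timelessness of the authoritative ghost resources). A secondary subtlety is the value case of \ruleref{rwp-spec-det}: unfolding the left disjunct leaves no coupling precondition into which to inject a model step, so one must rely on the fancy update modality and timelessness of the model resources to advance $\mstate_1$ to $\mstate_2$ before producing $\pred(\expr)$ from the premise.
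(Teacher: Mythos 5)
Your overall strategy is exactly the paper's: both rules are obtained by unfolding the least fixed point defining $\rwpre{\expr}[\mask]{\pred}$ to its non-value disjunct, using the agreement between $\spec$ and $\specinterp$, and then invoking \ruleref{cpl-model} (for \ruleref{rwp-spec-det}) and \ruleref{cpl-model-prog} (for \ruleref{rwp-coupl-rand}), with the authoritative ghost update on the model component performed under the $\later$ and the update modality supplied by the coupling precondition. For expressions that are not values this goes through as you describe, including the alignment of the constructor's $\later$ with the $\later \prop$ in the hypothesis and the lifting of the coupling along the pushforward $n \mapsto (n, \state)$.

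The step that does not work as written is your treatment of the value case of \ruleref{rwp-spec-det}. When $\expr \in \Val$, the goal unfolds to $\pvs[\mask] \pred(\expr)$ and nothing more: the authoritative resource $\specinterp(\mstate_1)$ is only handed to you in the non-value branch, so there is no frame-preserving update available that turns the fragment $\spec(\mstate_1)$ into $\spec(\mstate_2)$; and timelessness is of no help for stripping $\later \prop$, since $\prop$ is an arbitrary (in general non-timeless) proposition and no program or model step is available in this branch to eliminate the modality. So "fancy update plus timelessness" cannot discharge this case; the resolution has to come from the statement or the model itself (restricting the rule to non-values, or giving the value branch of the weakest precondition access to model steps), not from the argument you give. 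A smaller point of the same flavour: \ruleref{cpl-model-prog} requires $\red(\mstate_1)$, which is not implied by the premise $\refRcoupl{\stepdistr(\mstate_1)}{\unif(\tapebound)}{R}$ when $\stepdistr(\mstate_1)$ is the zero sub-distribution (the coupling then holds with $R = \emptyset$), so your application of that constructor implicitly assumes reducibility of the model state.
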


\section{Semantic Model}
\small
\begin{definition}[Refinement weakest precondition]
  \begin{align*}
    \rwpre{\expr_{1}}[\mask]{\pred} \eqdef{} \lfp W .
    & (\expr_{1} \in \Val \sep \pvs[\mask] \pred(\expr_{1})) \lor{} \\
    & (\expr_{1} \not\in \Val \sep \All \state_{1}, \mstate_{1} . \specinterp(\mstate_{1}) \ast \stateinterp(\state_{1}) \wand{} \pvs[\mask][\emptyset] \\
    & \quad \execCoupl{\mstate_{1}}{(\expr_{1}, \state_{1})}{\mstate_{2}, (\expr_{2}, \state_{2}) .
      \pvs[\emptyset][\mask] \specinterp(\mstate_{2}) \sep \stateinterp(\state_{2}) \sep W(\expr_{2}, \mask, \pred)}) \\
  \end{align*}
\end{definition}

\begin{definition}[Coupling modality]
  \begin{mathpar}
    \infer
    { \red(\cfg_1) \\
      \refRcoupl{\mret(\mstate)}{\stepdistr(\cfg_{1})}{R} \\
      \All s \in R . \pvs[\emptyset] \predB(s)
    }
    {\execCoupl{\mstate}{\cfg_{1}}{\predB}}
    \and
    \infer
    { \red(\cfg_1) \\
      \red(\mstate_1) \\
      \refRcoupl{\stepdistr(\mstate_1)}{\stepdistr(\cfg_1)}{R} \\
      \All s \in R . \later \pvs[\emptyset] \predB(s)
    }
    {\execCoupl{\mstate_1}{\cfg_1}{\predB}}
    \and
    \infer
    { \red(\mstate_1) \\
      \refRcoupl{\stepdistr(\mstate_{1})}{\mret(\cfg)}{R} \\
      \All (\mstate_{2}, \cfg) \in R . \later \pvs[\emptyset] \execCoupl{\mstate_2}{\cfg}{\predB}
    }
    {\execCoupl{\mstate_1}{\cfg}{\predB}}
    \and
    \infer
    { \red(\mstate_1) \\
      l \subseteq \dom(\state_1) \\
      \refRcoupl{\stepdistr(\mstate_{1})}{\foldM(\statestepdistr, \state_{1}, l)}{R} \\
      \All (\mstate_{2}, \state_{2}) \in R . \later \pvs[\emptyset] \execCoupl{\mstate_2}{(\expr_1, \state_2)}{\predB}
    }
    {\execCoupl{(\expr_1, \state_1)}{\mstate_1}{\predB}}
  \end{mathpar}
\end{definition}


\end{document}